\newcommand{\diag}[1]{\operatorname{diag}\left( #1\right)}
\def\eqref#1{equation~\ref{#1}}
\def\1{\bm{1}}
\def\rv{{\textnormal{v}}}
\def\rva{{\mathbf{a}}}
\def\rvb{{\mathbf{b}}}
\def\rve{{\mathbf{e}}}
\def\rvg{{\mathbf{g}}}
\def\rvs{{\mathbf{s}}}
\def\rvv{{\mathbf{v}}}
\def\rvw{{\mathbf{w}}}
\def\rvx{{\mathbf{x}}}
\def\rvy{{\mathbf{y}}}
\def\rvz{{\mathbf{z}}}
\def\rmA{{\mathbf{A}}}
\def\rmB{{\mathbf{B}}}
\def\rmC{{\mathbf{C}}}
\def\rmI{{\mathbf{I}}}
\def\rmM{{\mathbf{M}}}
\def\rmS{{\mathbf{S}}}
\def\rmU{{\mathbf{U}}}
\def\rmV{{\mathbf{V}}}
\def\rmW{{\mathbf{W}}}
\def\rmX{{\mathbf{X}}}
\def\rmY{{\mathbf{Y}}}
\def\rmZ{{\mathbf{Z}}}
\DeclareMathAlphabet{\mathsfit}{\encodingdefault}{\sfdefault}{m}{sl}
\SetMathAlphabet{\mathsfit}{bold}{\encodingdefault}{\sfdefault}{bx}{n}
\def\sS{{\mathbb{S}}}
\newcommand{\R}{\mathbb{R}}
\DeclareMathOperator{\Tr}{Tr}
\theoremstyle{plain}
\newtheorem{theorem}{Theorem}[section]
\newtheorem{proposition}[theorem]{Proposition}
\newtheorem{lemma}[theorem]{Lemma}
\theoremstyle{definition}
\theoremstyle{remark}
\begin{document}
\twocolumn[
\icmltitle{Adaptive Whitening in Neural Populations with Gain-modulating Interneurons}



\icmlsetsymbol{equal}{*}

\begin{icmlauthorlist}
\icmlauthor{Lyndon R. Duong}{equal,cns}
\icmlauthor{David Lipshutz}{equal,ccn}
\icmlauthor{David J. Heeger}{cns}
\icmlauthor{Dmitri B. Chklovskii}{ccn,ni}
\icmlauthor{Eero P. Simoncelli}{cns,ccn}
\end{icmlauthorlist}

\icmlaffiliation{cns}{Center for Neural Science, New York University;}
\icmlaffiliation{ccn}{Center for Computational Neuroscience, Flatiron Institute;}
\icmlaffiliation{ni}{Neuroscience Institute, NYU School of Medicine}

\icmlcorrespondingauthor{Lyndon R. Duong}{lyndon.duong@nyu.edu}
\icmlcorrespondingauthor{David Lipshutz}{dlipshutz@flatironinstitute.org}

\icmlkeywords{computational neuroscience, recurrent neural networks, adaptation, efficient coding, gain modulation}

\vskip 0.3in
]



\printAffiliationsAndNotice{\icmlEqualContribution} 

\begin{abstract}
Statistical whitening transformations play a fundamental role in many computational systems, and may also play an important role in biological sensory systems.
Existing neural circuit models of adaptive whitening operate by modifying synaptic interactions; however, such modifications would seem both too slow and insufficiently reversible.
Motivated by the extensive neuroscience literature on gain modulation, we propose an alternative model that adaptively whitens its responses by modulating the gains of individual neurons. 
Starting from a novel whitening objective, we derive an online algorithm that whitens its outputs by adjusting the marginal variances of an \textit{overcomplete} set of projections.
We map the algorithm onto a recurrent neural network with fixed synaptic weights and gain-modulating interneurons.
We demonstrate numerically that sign-constraining the gains improves robustness of the network to ill-conditioned inputs, and a generalization of the circuit achieves a form of local whitening in convolutional populations, such as those found throughout the visual or auditory systems.
\end{abstract}

\begin{figure*}[htb]
    \centering
    \includegraphics[width=\textwidth]{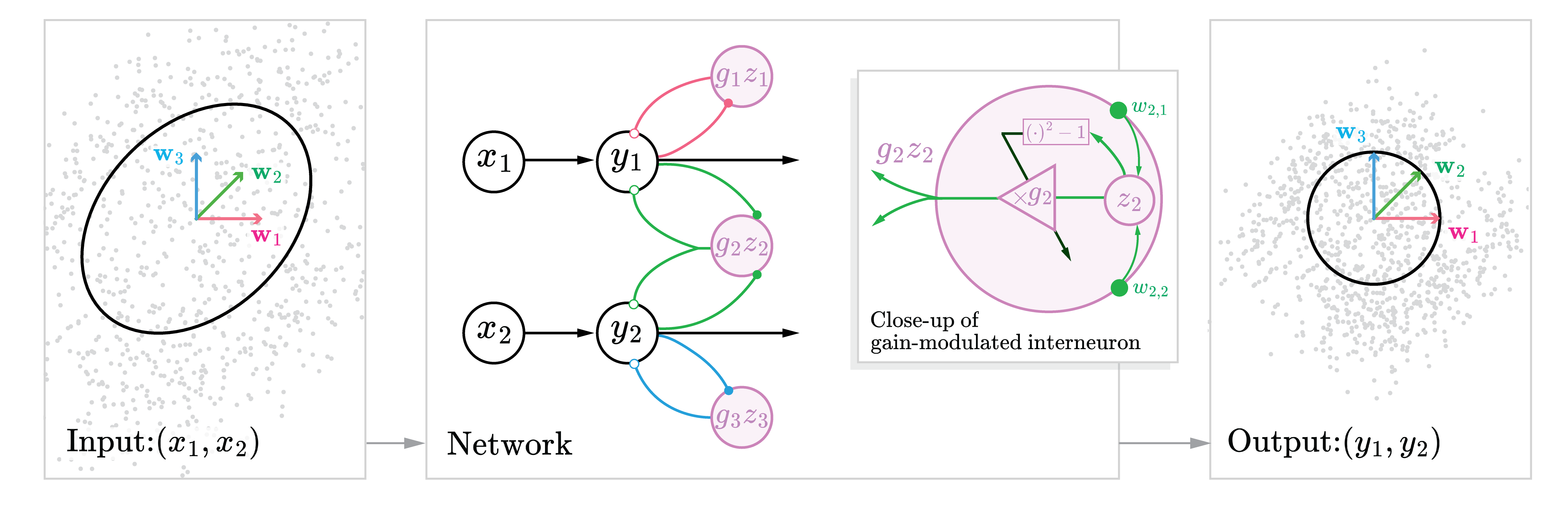}
    \caption{
    Schematic of a recurrent statistical whitening network with 2 primary neurons and 3 interneurons. 
    {\bf Left}: 2D Scatter plot of network inputs $\rvx=[x_1,x_2]^\top$ (e.g. post-synaptic currents), with covariance indicated by the ellipse.
    {\bf Center}:  
    Primary neurons, with outputs $\rvy=[y_1, y_2]^\top$, receive external feedforward inputs, $\rvx$, and recurrent feedback from an overcomplete population of interneurons, $-\sum_{i=1}^3g_iz_i\rvw_i$. 
    Projection vectors $\{{\color{RubineRed}\rvw_1}, {\color{ForestGreen}\rvw_2}, {\color{Cerulean}\rvw_3}\}\in\mathbb{R}^2$ encode feedforward synaptic weights connecting primary neurons to interneuron $i=1,2,3$, with \textit{symmetric} feedback connections. 
    Weight vectors are shown in the left and right panels with corresponding colors.
    In general, the network may require all-to-all connectivity between primary and interneurons; we use a reduced subset of connections here for diagram clarity.
    {\bf Inset:} The $i$\textsuperscript{th} interneuron (e.g. here $i=2$) receives input $z_i=\rvw_i^\top\rvy$, which is multiplied by its gain $g_i$ to produce output $g_iz_i$. 
    Its gain, $g_i$, is adjusted s.t. $\Delta g_i \propto z_i^2-1$. The dark arrow indicates that the gain update operates on a slower time scale.
    {\bf Right:} Scatter plots of the whitened network outputs $\rvy$. 
    Outputs have unit variance along all $\rvw_i$'s, which is equivalent to having identity covariance matrix, i.e., $\rmC_{yy}=\rmI_N$ (black circle).
    }
    \label{fig:schematique}
\end{figure*}

\section{Introduction}

Statistical whitening transformations, in which multi-dimensional inputs are decorrelated and normalized to have unit variance, are common in signal processing and machine learning systems.
For example, they are integral to many statistical factorization methods \citep{olshausen_field1996, bell_sejnowski1997, hyvarinen2000independent}, they provide beneficial preprocessing during neural network training \citep{krizhevsky2009learning}, and they can improve unsupervised feature learning \citep{coates2011analysis}.
More recently, self-supervised learning methods have used decorrelation transformations such as whitening to prevent representational collapse \citep{ermolov2021whitening,zbontar2021barlow,hua2021feature,bardes2021vicreg}.
While whitening has mostly been used for training neural networks in the offline setting, it is also of interest to develop adaptive (run-time) variants that can adjust to dynamically changing input statistics with minimal changes to the network \citep[e.g.][]{mohan_adaptive_2021,hu2021lora}.

Single neurons in early sensory areas of many nervous systems rapidly adjust to changes in input statistics by scaling their input-output gains \citep{adrian1928action}.
This allows neurons to adaptively normalize the variance of their outputs \citep{bonin_statistical_2006, nagel_temporal_2006}, maximizing information transmitted about sensory inputs \citep{barlowpossible1961,laughlinsimple1981,fairhallefficiency2001}.
At the neural \textit{population} level, in addition to variance normalization, adaptive decorrelation and whitening transformations have been observed across species and sensory modalities, including: macaque retina \citep{Atick1992WhatDT}; cat primary visual cortex \citep{muller1999rapid,benucci2013adaptation}; and the olfactory bulbs of zebrafish \citep{friedrich2013neuronal} and mice \citep{giridhar2011timescale,gschwend2015neuronal}. 
These population-level adaptations reduce redundancy in addition to normalizing neuronal outputs, facilitating \textit{dynamic} efficient multi-channel coding \citep{schwartz_natural_2001, barlowadaptation1989}.
However, the mechanisms underlying such adaptive whitening transformations remain unknown, and would seem to require coordinated synaptic adjustments amongst neurons, as opposed to the single neuron case which relies only on gain rescaling. 

Here,  we propose a novel recurrent network architecture for online statistical whitening that exclusively relies on gain modulation.  
Specifically, the primary contributions of our study are as follows:
\begin{enumerate}
    \item We introduce a novel factorization of the (inverse) whitening matrix, using an \textit{overcomplete, arbitrary, but fixed} basis, and a diagonal matrix with statistically optimized entries. This is in contrast with the conventional factorization using the eigendecomposition of the input covariance matrix.
    
    \item  We introduce an unsupervised online learning objective using this factorization to express the whitening objective solely in terms of the \textit{marginal} variances within the overcomplete representation of the input signal. 
    
    \item  We derive a recursive algorithm to optimize the objective, and show that it corresponds to an unsupervised recurrent neural network (RNN), comprised of primary neurons and an auxiliary overcomplete population of interneurons, whose synaptic weights are fixed, but whose gains are adaptively modulated. The network responses converge to the classical symmetric whitening solution without backpropagation.
    
    \item   We show how enforcing non-negativity on the gain modulation provides a novel approach for dealing with ill-conditioned or noisy data.
    Further, we relax the global whitening constraint in our objective and provide a method for \textit{local} decorrelation of convolutional neural populations.
\end{enumerate}

\section{A Novel Objective for Symmetric Whitening}

Consider a neural network with $N$ primary neurons.
For each $t=1,2,\dots$, let $\rvx_t$ and $\rvy_t$ be $N$-dimensional vectors whose components respectively denote the inputs (e.g. post-synaptic currents), and outputs of the primary neurons at time $t$ (\autoref{fig:schematique}). Without loss of generality, we assume the inputs $\rvx_t$ are centered.

\subsection{Conventional objective}

Statistical whitening aims to linearly transform inputs $\rvx_t$ so that the covariance of the outputs $\rvy_t$ is the identity, i.e., 
\begin{align}\label{eq:Cyy}
    \rmC_{yy}=\langle\rvy_t\rvy_t^\top\rangle_t=\rmI_N,
\end{align}
where $\langle\cdot\rangle_t$ denotes the expectation operator over $t$, and $\rmI_N$ denotes the $N\times N$ identity matrix (see \autoref{appendix:notation} for a list of notation used in this work). 

It is well known that whitening is not unique: any orthogonal rotation of a random vector with identity covariance matrix also has identity covariance matrix. 
There are several common methods of resolving this rotational ambiguity, each with their own advantages \citep{kessy2018optimal}. 
Here, we focus on the symmetric whitening transformation, often referred to as Zero-phase Component Analysis (ZCA) whitening or Mahalanobis whitening, which minimizes the mean-squared error between the inputs and the whitened outputs (alternatively, the one whose transformation matrix is symmetric).
The symmetric whitened outputs are the optimal solution to the minimization problem
\begin{align}\label{eq:symmetricobjectivevanilla}
    &\min_{\{\rvy_t\}}\langle\|\rvx_t-\rvy_t\|_2^2\rangle_t\quad\text{s.t.}\quad\langle\rvy_t \rvy_t^\top\rangle_t=\rmI_N,
\end{align}
where $\|\cdot\|_2$ denotes the Euclidean norm on $\R^N$.
Assuming the covariance of the inputs $\rmC_{xx}:=\langle\rvx_t\rvx_t^\top\rangle_t$ is positive definite, the unique solution to the optimization problem in \autoref{eq:symmetricobjectivevanilla} is $\rvy_t=\rmC_{xx}^{-1/2}\rvx_t$ for $t=1,2,\dots$, where $\rmC_{xx}^{-1/2}$ is the symmetric inverse matrix square root of $\rmC_{xx}$ (see Appendix \ref{apdx:optimal}).

Previous approaches to \textit{online} symmetric whitening have optimized \autoref{eq:symmetricobjectivevanilla} by deriving RNNs whose \textit{synaptic weights} adaptively adjust to learn the eigendecomposition of the (inverse) whitening matrix, $\rmC_{xx}^{1/2} = \rmV {\bf \Lambda}^{1/2} \rmV^\top$, where $\rmV$ is an orthogonal matrix of eigenvectors and ${\bf \Lambda}$ is a diagonal matrix of eigenvalues \citep{pehlevan2015normative}.
We propose an entirely different decomposition: $\rmC_{xx}^{1/2} = \rmW \diag{\rvg} \rmW^\top + \rmI_N$, where $\rmW$ is a \textit{fixed} overcomplete matrix of synaptic weights, and $\rvg$ is a vector of \textit{gains} that adaptively adjust to match the whitening matrix. 

\subsection{A novel objective using \textit{marginal} statistics}
We formulate an objective for learning the symmetric whitening transform via gain modulation.
Our innovation exploits the fact that a random vector has identity covariance matrix (i.e., \autoref{eq:Cyy} holds) if and only if it has unit marginal variance along \textit{all possible 1D projections} (a form of tomography; see Related Work). We derive a tighter statement for a finite but \textit{overcomplete} set of at least $K\ge K_N:=N(N+1)/2$ distinct axes (`overcomplete' means that the number of axes exceeds the dimensionality of the input, i.e., $K>N$). 
Intuitively, this equivalence holds because an $N\times N$ symmetric matrix has $K_N$ degrees of freedom, so the marginal variances along $K\ge K_N$ distinct axes are sufficient to constrain an $N\times N$ covariance matrix.
We formalize this equivalence in the following proposition, whose proof is provided in \autoref{appendix:num_projections}.
\begin{proposition} \label{prop:marginal}
Fix $K\geq K_N$. Suppose $\rvw_1,\dots,\rvw_K\in\R^N$ are unit vectors\footnote{The unit-length assumption is imposed, without loss of generality, for notational convenience.} such that 
\begin{equation}\label{eq:spanSN}
    \text{span}(\{\rvw_1\rvw_1^\top,\dots,\rvw_K\rvw_K^\top\})=\sS^N,
\end{equation}
where $\sS^N$ denotes the $K_N$-dimensional vector space of $N\times N$ symmetric matrices. 
Then \autoref{eq:Cyy} holds if and only if the projection of $\rvy_t$ onto each unit vector $\rvw_1,\dots,\rvw_K$ has unit variance, i.e.,
\begin{align}\label{eq:equivalence}
    \langle(\rvw_i^\top\rvy_t)^2\rangle_t=1\quad\text{for}\quad i=1,\dots,K.
\end{align}
\end{proposition}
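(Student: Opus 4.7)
The plan is to prove both implications by rewriting the marginal variance condition as a Frobenius inner product statement on the space $\sS^N$ of symmetric matrices, and then using the spanning hypothesis to conclude.

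For the forward direction, I would simply observe that $\langle(\rvw_i^\top\rvy_t)^2\rangle_t = \rvw_i^\top\langle\rvy_t\rvy_t^\top\rangle_t\rvw_i = \rvw_i^\top\rmC_{yy}\rvw_i$. Under the hypothesis $\rmC_{yy}=\rmI_N$ and the unit-length assumption $\|\rvw_i\|_2=1$, this immediately equals $1$ for every $i$. This direction is essentially a one-line computation and does not use the spanning assumption.

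For the reverse direction, the key step is to reformulate each scalar equation $\langle(\rvw_i^\top\rvy_t)^2\rangle_t=1$ as an equation about the symmetric matrix $\rmC_{yy}-\rmI_N$. Writing $\rvw_i^\top\rmC_{yy}\rvw_i = \Tr(\rvw_i\rvw_i^\top\rmC_{yy})$ and $\|\rvw_i\|_2^2 = \Tr(\rvw_i\rvw_i^\top)$, the hypothesis \eqref{eq:equivalence} becomes
\begin{equation*}
    \langle\rvw_i\rvw_i^\top,\,\rmC_{yy}-\rmI_N\rangle_F = 0 \quad\text{for}\quad i=1,\dots,K,
\end{equation*}
where $\langle A,B\rangle_F := \Tr(A^\top B)$ is the Frobenius inner product. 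By the spanning hypothesis \eqref{eq:spanSN}, every symmetric matrix $M\in\sS^N$ can be written as a finite linear combination of the rank-one matrices $\rvw_i\rvw_i^\top$, so the displayed equations extend by linearity to $\langle M,\rmC_{yy}-\rmI_N\rangle_F=0$ for every $M\in\sS^N$. Since $\rmC_{yy}-\rmI_N$ is itself symmetric, I can choose $M=\rmC_{yy}-\rmI_N$ and conclude that $\|\rmC_{yy}-\rmI_N\|_F^2=0$, hence $\rmC_{yy}=\rmI_N$.

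I do not anticipate a real obstacle: the dimension bound $K\geq K_N=\dim\sS^N$ is not used directly beyond being a necessary condition for the spanning hypothesis to be feasible (it is the minimum number of rank-one symmetric matrices needed to span $\sS^N$), and the argument itself uses only that $\{\rvw_i\rvw_i^\top\}_{i=1}^K$ spans $\sS^N$. The only mild subtlety worth flagging is that $\rmC_{yy}-\rmI_N$ lies in $\sS^N$, which is what lets us pick it as the test matrix $M$ at the end; this step is what closes the argument and yields the equivalence.
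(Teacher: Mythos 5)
Your proof is correct and follows essentially the same route as the paper's: the forward direction is the same one-line computation, and the reverse direction rests on the same two ingredients, the spanning hypothesis \eqref{eq:spanSN} and linearity of the trace pairing. The only (cosmetic) difference is the final step: the paper expands an arbitrary rank-one matrix $\rvv\rvv^\top$ in the $\rvw_i\rvw_i^\top$ and shows $\rvv^\top\rmC_{yy}\rvv=1$ for every unit vector $\rvv$, whereas you test $\rmC_{yy}-\rmI_N$ against itself in the Frobenius inner product, which closes the argument a touch more directly.
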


Assuming \autoref{eq:spanSN} holds, we can interpret the set of vectors $\{\rvw_1,\dots,\rvw_K\}$ as a \textit{frame} \citep[i.e., an overcomplete basis;][]{casazza_introduction_2013} in $\R^N$ such that the covariance of the outputs $\rmC_{yy}$ can be computed from the variances of the $K$-dimensional projection of the outputs onto the set of frame vectors. 
Thus, we can replace the whitening constraint in \autoref{eq:symmetricobjectivevanilla} with the equivalent \textit{marginal variance} constraint to obtain the following objective:
\begin{align}\label{eq:symmetricobjective}
    \min_{\{\rvy_t\}}\langle\|\rvx_t-\rvy_t\|_2^2\rangle_t \quad \text{s.t.}&\quad
    \text{\autoref{eq:equivalence} holds}.
\end{align}

\section{An RNN with Gain Modulation for Adaptive Symmetric Whitening}
In this section, we derive an online algorithm for solving the optimization problem in \autoref{eq:symmetricobjective} and map the algorithm onto an RNN with adaptive gain modulation. 
Assume we have an overcomplete frame $\{\rvw_1,\dots,\rvw_K\}$ in $\R^N$ satisfying \autoref{eq:spanSN}.
We concatenate the frame vectors into an $N\times K$ synaptic weight matrix $\rmW:=[\rvw_1,\dots,\rvw_K]$. 
In our network, primary neurons project onto a layer of $K$ interneurons via the synaptic weight matrix to produce the $K$-dimensional vector $\rvz_t:=\rmW^\top\rvy_t$, encoding the interneurons' post-synaptic inputs at time $t$ (\autoref{fig:schematique}). 
We emphasize that the synaptic weight matrix $\rmW$ remains \textit{fixed}.

\subsection{Enforcing the marginal variance constraints with scalar gains}

We introduce Lagrange multipliers $g_1,\dots,g_K\in\R$ to enforce the $K$ constraints in \autoref{eq:equivalence}.
These are concatenated as the entries of a $K$-dimensional vector $\rvg:=[g_1,\dots,g_K]^\top\in\R^K$, and express the whitening objective as a saddle point optimization:
\begin{align}\label{eq:Lagrange}
    &\max_\rvg\min_{\{\rvy_t\}}\langle\ell(\rvx_t,\rvy_t,\rvg)\rangle_t,\\
    &\text{where}\enspace\ell(\rvx,\rvy,\rvg):=\|\rvx-\rvy\|_2^2+\sum_{i=1}^Kg_i\left\{(\rvw_i^\top\rvy)^2-1\right\}.\nonumber
\end{align}
Here, we have exchanged the order of maximization over $\rvg$ and minimization over $\rvy_t$, which is justified because $\ell(\rvx_t,\rvy_t,\rvg)$ satisfies the saddle point property with respect to $\rvy$ and $\rvg$, see \autoref{apdx:saddle}. 

In our RNN implementation, there are $K$ interneurons and $g_i$ corresponds to the multiplicative gain associated with the $i$\textsuperscript{th} interneuron, so that its output at time $t$ is $g_iz_{i,t}$ (\autoref{fig:schematique}, Inset).
\autoref{eq:Lagrange}, shows that the gain of the $i$\textsuperscript{th} interneuron, $g_i$, encourages the marginal variance of $\rvy_t$ along the axis spanned by $\rvw_i$ to be unity.
Importantly, the gains are not hyper-parameters, but rather they are optimization variables which statistically whiten the outputs $\{\rvy_t\}$, preventing the neural outputs from trivially matching the inputs $\{\rvx_t\}$.

\subsection{Deriving RNN neural dynamics and gain updates}
\label{ssec:offline_algorithm}

To solve \autoref{eq:Lagrange} in the online setting, we assume there is a time-scale separation between `fast' neural dynamics and `slow' gain updates, so that at each time step the neural dynamics equilibrate before the gains are adjusted. 
This allows us to perform the inner minimization over $\{\rvy_t\}$ before the outer maximization over the gains $\rvg$.
This is consistent with biological networks in which a given neuron's responses operate on a much faster time-scale than its intrinsic input-output gain, which is driven by slower processes such as changes in Ca\textsuperscript{2+} concentration gradients and Na\textsuperscript{+}-activated K\textsuperscript{+} channels \citep{wang2003adaptation, ferguson2020mechanisms}.

\subsubsection{Fast neural activity dynamics}
For each time step $t=1,2,\dots$, we minimize the objective $\ell(\rvx_t,\rvy_t,\rvg)$ over $\rvy_t$ by recursively running gradient-descent steps to equilibrium:
\begin{align}\label{eq:dydtau}
    \rvy_t&\gets\rvy_t-\frac\gamma 2\nabla_{\rvy}\ell(\rvx_t,\rvy_t(\tau),\rvg) \nonumber\\ 
    &=\rvy_t+\gamma\left\{\rvx_t-\rmW(\rvg\circ\rvz_t)-\rvy_t\right\}, 
\end{align}
where $\gamma>0$ is a small constant, $\rvz_t=\rmW^\top\rvy_t$, the circle `$\circ$' denotes the Hadamard (element-wise) product, $\rvg \circ \rvz_t$ is a vector of $K$ gain-modulated interneuron outputs, and we assume the primary cell outputs are initialized at zero. 

We see from the right-hand-side of \autoref{eq:dydtau} that the `fast' dynamics of the primary neurons are driven by three terms (within the curly braces): 1) constant feedforward external input $\rvx_t$; 2) recurrent gain-modulated feedback from interneurons $-\rmW(\rvg\circ\rvz_t)$; and 3) a leak term $-\rvy_t$. 
Because the neural activity dynamics are linear, we can analytically solve for their equilibrium (i.e. steady-state), $\bar\rvy_t$, by setting the update in \autoref{eq:dydtau} to zero:
\begin{align}\label{eq:y_steadystate}
    \bar\rvy_t&=\left[\rmI_N+\rmW\diag{\rvg}\rmW^\top\right]^{-1}\rvx_t \nonumber \\
    &= \left[\rmI_N+\sum_{i=1}^Kg_i\rvw_i\rvw_i^\top\right]^{-1}\rvx_t,
\end{align}
where $\diag{\rvg}$ denotes the $K\times K$ diagonal matrix whose $(i,i)$\textsuperscript{th} entry is $g_i$, for $i=1,\dots,K$. 
The equilibrium feedforward interneuron inputs are then given by 
\begin{align}\label{eq:z_steadystate}
    \bar\rvz_t=\rmW^\top\bar\rvy_t.
\end{align}
The gain-modulated outputs of the $K$ interneurons, $\rvg \circ \rvz_t$, are then projected back onto the primary cells via symmetric weights, $-\rmW$~(\autoref{fig:schematique}). 
After $\rvg$ adapts to optimize \autoref{eq:Lagrange} (provided Proposition~\ref{prop:marginal} holds), the matrix within the brackets in \autoref{eq:y_steadystate} will equal $\rmC_{xx}^{1/2}$, and the circuit's equilibrium responses are symmetrically whitened.
The result is a novel \textit{overcomplete} symmetric matrix factorization in which $\rmW$ is arbitrary and fixed, while $\rmC_{xx}^{1/2}$ is adaptively learned and encoded in the gains $\rvg$. 

\subsubsection{Slow gain dynamics}
After the fast neural activities reach steady-state, the interneuron gains are updated with a stochastic gradient-ascent step with respect to $\rvg$:
\begin{align}\label{eq:gupdate}
    \rvg&\gets\rvg+\frac\eta2\nabla_\rvg\ell(\rvx_t,\bar\rvy_t,\rvg) \nonumber\\
    &=\rvg+\eta\left(\bar\rvz_t^{\circ2}-{\bf 1}\right),
\end{align}
where $\eta>0$ is the learning rate, $\bar\rvz_t^{\circ2}=[\bar z_{t,1}^2,\dots,\bar z_{t,K}^2]^\top$,  and ${\bf 1}=[1,\dots,1]^\top$ is the $K$-dimensional vector of ones\footnote{\autoref{appendix:generalizations} generalizes the gain update to allowing for temporal-weighted averaging of the variance over past samples.}.
Remarkably, the update to the $i$\textsuperscript{th} interneuron's gain $g_i$ (\autoref{eq:gupdate}) depends only on the online estimate of the \textit{variance} of its equilibrium input $\bar z_{t,i}^2$, and its distance from 1 (i.e. the target variance). 
Since the interneurons adapt using local signals, this circuit is a suitable candidate for hardware implementations using low-power neuromorphic chips \citep{pehlevan_neuroscience-inspired_2019}.
Intuitively, each interneuron adjusts its gain to modulate the amount of suppressive (inhibitory) feedback onto the joint primary neuron responses. 
In \autoref{appendix:psd_frame}, we provide conditions under which $\rvg$ can be solved analytically.
Thus, while statistical whitening inherently involves a transformation on a joint density, our solution operates solely using single neuron gain changes in response to \textit{marginal} statistics of the joint density.

\subsubsection{Online unsupervised algorithm}
By combining Equations~\ref{eq:dydtau} and \ref{eq:gupdate}, we arrive at our online RNN algorithm for adaptive whitening via gain modulation (Algorithm \ref{alg:online}). 
We also provide batched and offline versions of the algorithm in \autoref{appendix:algorithms}.

\begin{algorithm}[htb]
\caption{Adaptive whitening via gain modulation}
\label{alg:online}
\begin{algorithmic}[1]
\STATE {\bfseries Input:} Centered inputs $\rvx_1,\rvx_2,\dots\in\R^N$
\STATE {\bfseries Initialize:} $\rmW\in\R^{N\times K}$; $\rvg\in\R^K$; $\eta,\gamma>0$ \\
 \FOR{$t=1,2,\dots$} 
\STATE $\rvy_t\gets{\bf 0}$
    \WHILE{not converged}  
        \STATE $\rvz_t\gets\rmW^\top\rvy_t$ 
        \STATE $\rvy_t\gets\rvy_t+\gamma\left\{\rvx_t-\rmW(\rvg\circ\rvz_t)-\rvy_t\right\}$ 
    \ENDWHILE
    \STATE $\rvg\gets\rvg+\eta\left({\rvz}_t^{\circ2}- {\bf 1} \right)$ 
 \ENDFOR
\end{algorithmic}
\end{algorithm}

There are two points worth noting about this network:
    1) $\rmW$ remains \textit{fixed} in Algorithm \ref{alg:online}. 
    Instead, $\rvg$ adapts to statistically whiten the outputs.  
  2) In practice, since network dynamics are linear, we can bypass the inner loop (the fast dynamics of the primary cells, lines 5--8), by directly computing $\bar{\rvy}_t$,  and $\bar{\rvz}_t$ (Eqs.~\ref{eq:y_steadystate},~\ref{eq:z_steadystate}).
 

\section{Numerical Experiments and Applications}

We provide different applications of our adaptive symmetric whitening network via gain modulation, emphasizing that gain adaptation is distinct from, and \textit{complementary to}, synaptic weight learning (i.e.\ learning $\rmW$).
We therefore side-step the goal of learning the frame $\rmW$, and assume it is fixed (for example, through longer time scale learning).
This allows us to decouple and analyze the general properties of our proposed gain modulation framework, independently of the choice of frame.
Python code for this study can be located at {\footnotesize \hyperlink{https://github.com/lyndond/frame_whitening}{\texttt{github.com/lyndond/frame\_whitening}}}.

We evaluate the performance of our adaptive whitening algorithm using the matrix operator norm, $\Vert \cdot \Vert_{\text{Op}}$, which measures the largest eigenvalue,
\begin{align*}
   \text{Error} := \Vert\rmC_{yy} - \rmI_N \Vert_{\text{Op}}.
\end{align*}
As a performance criterion, we use $\Vert \rmC_{yy} - \rmI_N \Vert_{\text{Op}} \leq 0.1$, the point at which the principal axes of $\rmC_{yy}$ are within 0.1 of unity. 
Geometrically, this means the  ellipsoid corresponding to the covariance matrix lies between the circles with radii 0.9 and 1.1.

For visualization of output covariance matrices, we plot 2D ellipses representing the 1-standard deviation probability level-set contour of the density.
These ellipses are defined by the set of points $\{\|\rmC_{yy}^{1/2}\rvv\|\rvv:\|\rvv\|=1\}$.

\subsection{Adaptive symmetric whitening via gain modulation}
\label{ssec:algorithm_validation}

\begin{figure}[tb]
\begin{center}
\centerline{\includegraphics[width=.8\columnwidth]{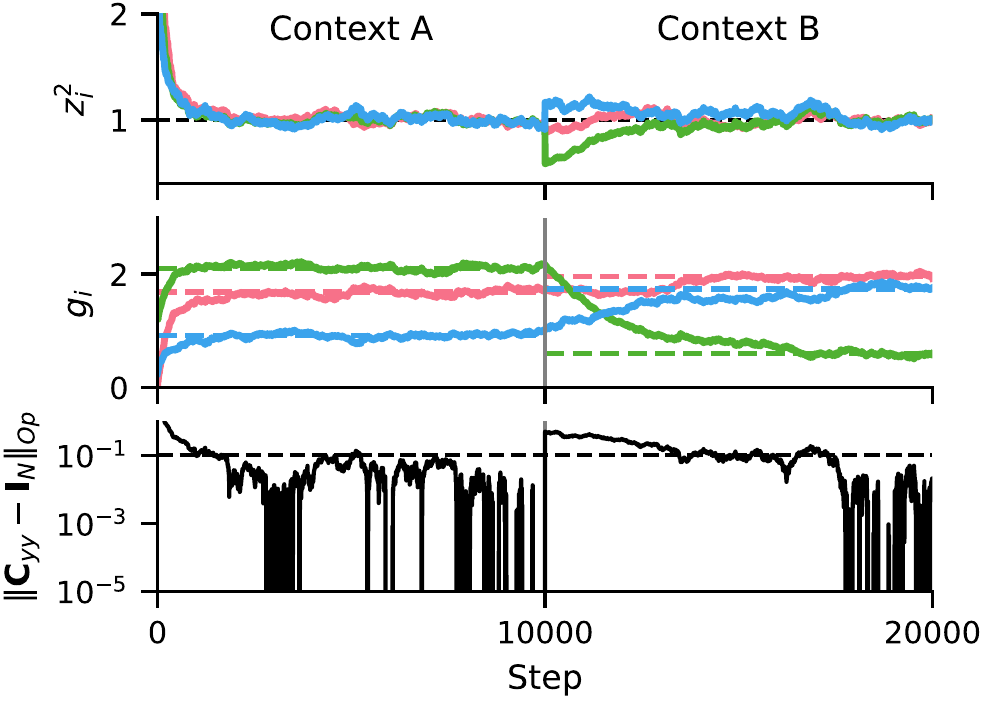}}
\caption{
    Network from \autoref{fig:schematique} (with corresponding colors; $N{=}2$, $K{=}K_N{=}3$, $\eta{=}$2E-3) adaptively whitening samples from two randomly generated statistical contexts online (10K steps each). {\bf Top:} Marginal variances measured by interneurons approach 1 over time.
    {\bf Middle:} Dynamics of interneuron gains, which are applied to $z_i$ before feeding back onto the primary cells. 
    Dashed lines are optimal gains (\autoref{appendix:psd_frame}).
    {\bf Bottom:} Error over time, as measured by the maximal difference between the standard deviation along the principal axes of $\rmC_{yy}$ and unity.
}
\label{fig:loss_go_down}
\end{center}
\vskip -0.1in
\end{figure}

We first demonstrate that our algorithm successfully whitens its outputs.
We initialize a network with fixed interneuron weights, $\rmW$, corresponding to the frame illustrated in \autoref{fig:schematique} ($N{=}2$, $K{=}K_N{=}3$).
\autoref{fig:loss_go_down} shows the network adapting to inputs from two successively-presented contexts with randomly-generated underlying input covariances $\rmC_{xx}$ (10K gain update steps each).
As update steps progress, all marginal variances converge to unity, as expected from the objective (top panel).
Since the number of interneurons satisfies $K{=}K_N$, the optimal gains to achieve symmetric whitening can be solved analytically (\autoref{appendix:psd_frame}), and are shown in the middle panel (dashed lines).

\autoref{fig:loss_go_down} illustrates the \textit{online, adaptive} nature of the network; it whitens inputs from novel statistical contexts at run-time, without supervision.
By Proposition~\ref{prop:marginal}, measuring unit variance along $K_N$ unique axes, as in this example, guarantees that the underlying joint density is statistically white.
Indeed, the whitening error (bottom panel), approaches zero as all $K_N$ marginal variances approach 1.
Thus, with interneurons monitoring their respective \textit{marginal} input variances $z_i^2$, and re-scaling their gains to modulate feedback onto the primary neurons, the network adaptively whitens its outputs in each context.

\subsection{Algorithmic convergence rate depends on $\rmW$}\label{ssec:convergence}
Our model assumes that the frame, $\rmW$, is fixed and known (e.g., optimized via pre-training or development).
This distinguishes our method from existing symmetric whitening methods, which typically operate by estimating and transforming to the eigenvector basis.
By contrast, our network obviates learning the principal axes of the data altogether, and instead uses a statistical sampling approach along the fixed set of measurement axes spanned by $\rmW$.
While the result expressed in Proposition \ref{prop:marginal} is exact, and the \textit{optimal solution} to the whitening objective \autoref{eq:symmetricobjective} is independent of ${\rmW}$ (provided \autoref{eq:spanSN} holds), we hypothesize that the \textit{algorithmic convergence rate} would depend on $\rmW$.

\autoref{fig:grassmann} summarizes an experiment assessing the convergence rate of different networks whitening inputs with a random covariance, $\rmC_{xx}$, with $N=2$ (the results are consistent when $N>2$).
We initialize three kinds of frames $\rmW \in \R^{N \times K_N}$ with 100 repetitions each: `{\bf Random}', a frame with i.i.d.\ Gaussian entries; `{\bf Optimized}', a randomly initialized frame whose columns are then optimized to have minimum mutual coherence and cover the ambient space; and `{\bf Spectral}', a frame whose first $N$ columns are the eigenvectors of the data and the remaining $K_N-N$ columns are zeros.
For clarity, we remove the effects of input sampling stochasticity by running the offline version of our network, which assumes having direct access to the input covariance (\autoref{appendix:algorithms}); the online version is qualitatively similar.

\begin{figure}[htb]
\begin{center}
\centerline{\includegraphics[width=\columnwidth]{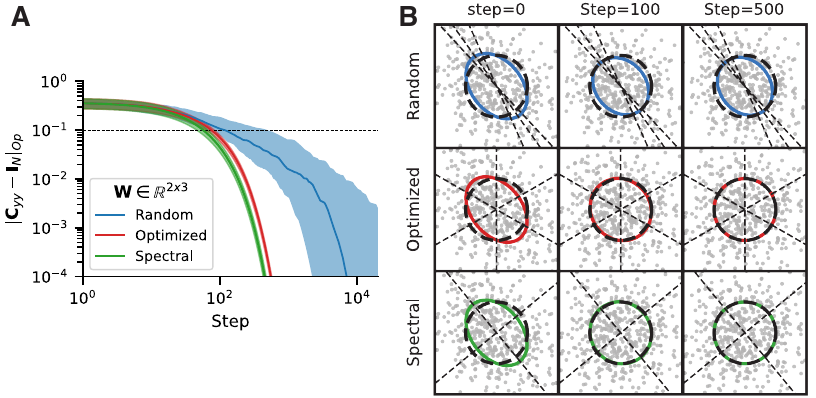}}
\caption{
Convergence rate depends on structure of $\rmW$. For each network, $\eta{=}$1E-2.
{\bf A:} Error over time.
Curves are median and [25\%, 75\%] quantile regions over 100 repeats.
Dashed line indicates when the principal axes of 1-standard deviation ellipse representing $\rmC_{yy}$ are within 0.1 of unity.
{\bf B:} Scatter plots and covariance ellipses of $\rvy$ for a single experiment with each frame type at different steps. Gray dashed lines are axes spanned by $\rmW$.
}
\label{fig:grassmann}
\end{center}
\vskip -0.3in
\end{figure}

When the input distribution is known, then using the input covariance eigenvectors, as with the Spectral frame, defines a bound on achievable performance, converging faster, on average, than the Random and Optimized frames (\autoref{fig:grassmann}A,B).
This is because the frame is aligned with the input covariance's principal axes, and a simple gain scaling along those directions is sufficient to achieve a whitened response.
We find that the networks with Optimized frames converge at similar rates to those with Spectral frames, despite the frame vectors not being aligned with the principal axes of the data (\autoref{fig:grassmann}B).
Comparing the Random to Optimized frames gives a better understanding of how one might choose a frame in the more realistic scenario when the input distribution is unknown.
The networks with Optimized frames systematically converge faster than Random frames.
Thus, when the input distribution is unknown, we empirically find that the convergence rate of Algorithm~\ref{alg:online} benefits from a frame that is optimized to splay the ambient space. 
Increased coverage of the space by the frame vectors facilitates whitening with our gain re-scaling mechanism.
Sec.~\ref{ssec:conv} elaborates on how underlying signal structure can be exploited to inform more efficient choices of frames.

\subsection{Implicit sparse gating via gain modulation}

\begin{figure}[htb]
\begin{center}
\centerline{\includegraphics[width=\columnwidth]{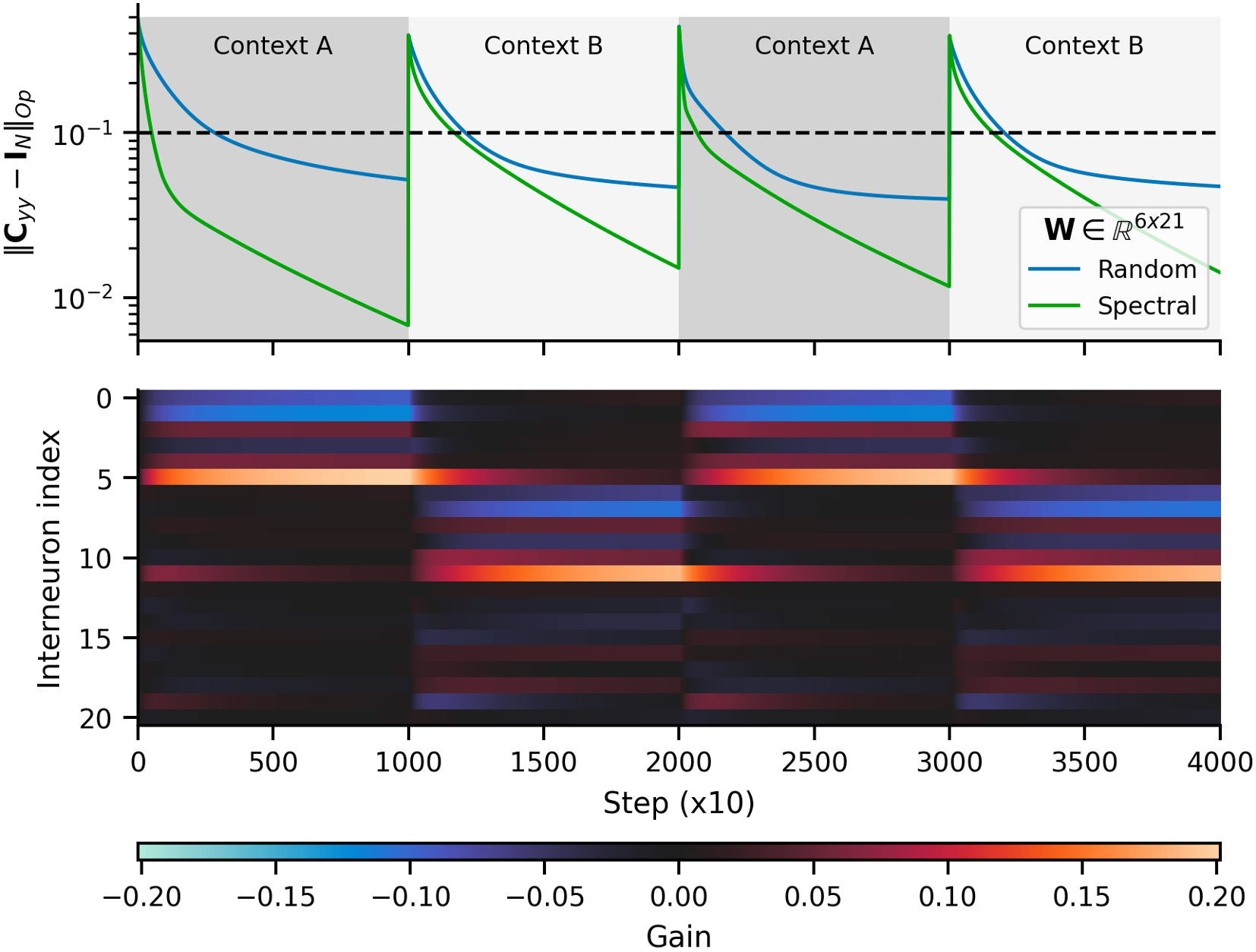}}
\caption{
Gain modulation as a fast implicit sparse gating mechanism.
{\bf Top}: Error over time for Spectral vs. Random networks ($N{=}6$; $K{=}K_N{=}21$; $\eta{=}$1E-3) adapting to 2 alternating statistical contexts with different input covariances.
Dashed line indicates when the principal axes of 1-standard deviation ellipsoid representing $\rmC_{yy}$ are within 0.1 of unity.
{\bf Bottom}: Gains act as implicit context switches, sparsely gating the respective eigenbases embedded in the Spectral frame to optimally whiten each context.
}
\label{fig:contexts}
\end{center}
\vskip -0.3in
\end{figure}

Motivated by the findings in Sec~\ref{ssec:convergence}, and concepts from sparse coding \citep{olshausen_field1996}, we explore how adaptive gain modulation can complement or augment a `pre-trained' network with context-dependent weights.
\autoref{fig:contexts} shows an experiment using either a pre-trained Spectral, or Random $\rmW$ ($N{=}6$, $K{=}K_N{=}21$) adaptively whitening inputs from two random, alternating statistical contexts, A and B, for 10K steps each.
The first and second $N$ columns of the Spectral frame are the eigenvectors of context A and B's covariance matrix, respectively, and the remaining elements are random i.i.d.\ Gaussian; the Random frame has all i.i.d. Gaussian elements.
\autoref{fig:contexts} (top panel) shows that both networks successfully adapt to whiten the inputs from each context, with the Spectral frame converging faster than the Random frame (as in Sec~\ref{ssec:convergence}).

Inspecting the Spectral frame's $K$ interneuron gains during run-time (bottom panel) reveals that they sparsely `select' the frame vectors corresponding to the eigenvectors of each respective condition (indicated by the blue/red intensity).
This effect arises \textit{without} a sparsity penalty or modifying the objective.
Gain modulation thus \textit{sparsely gates} context-dependent information without an explicit context signal.

\subsection{Normalizing ill-conditioned data}

\begin{figure}[htb]
\begin{center}
\centerline{\includegraphics[width=.9\columnwidth]{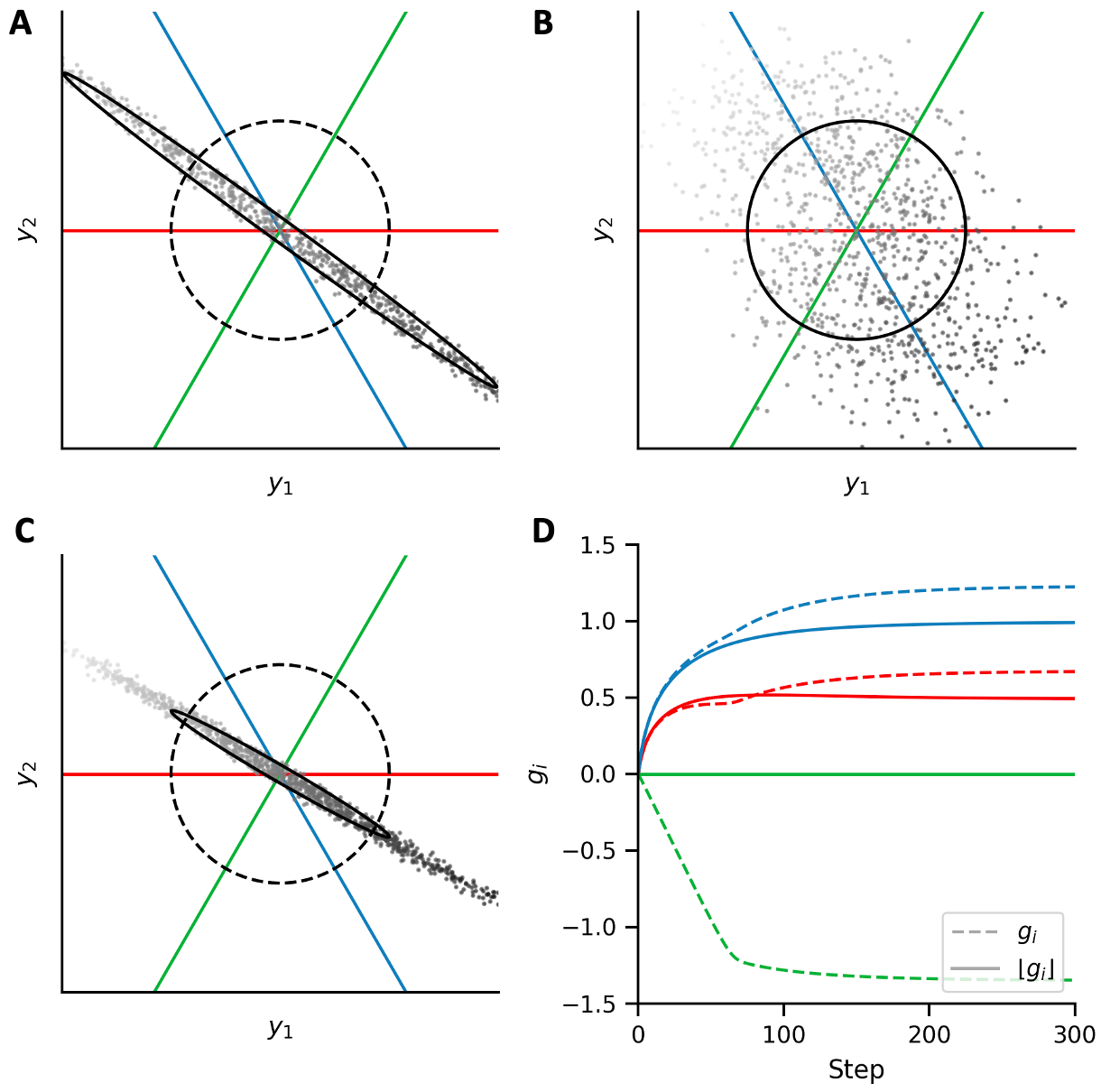}}
\caption{
Two networks ($N{=}2$, $K{=}3$, $\eta{=}0.02$) whitening ill-conditioned inputs.
{\bf A:} Outputs without whitening. 2D scatterplot of a non-Gaussian density whose underlying signal lies close to a latent 1D axis. 
Many points lie outside of the axis limits in this panel.
Signal magnitude along that axis is denoted by the grayscale gradient.
The 1-standard deviation covariance matrix is depicted as a black ellipse.
Colored lines are axes spanned by Optimal frame (see Sec~\ref{ssec:convergence}).
{\bf B:} Symmetric whitening boosts noise along the uninformative direction.
{\bf C:} Modulating gains according to Eq.~\ref{eq:grectified} rescales the data \textit{without} amplifying noise.
{\bf D:} Gains updated with Eq.~\ref{eq:gupdate}  vs. Eq.~\ref{eq:grectified}. Colors correspond to frame axes in panels A--C.
}
\label{fig:low_rank}
\end{center}
\vskip -0.3in
\end{figure}
Foundational work by \citet{Atick1992WhatDT} showed that neural populations in the retina may encode visual inputs by optimizing mutual information in the presence of noise.
For natural images with $1/f$ spectra, the optimal transform is approximately a product of a whitening filter and a low-pass filter. 
This is a particularly effective solution because when inputs are low-rank, $\rmC_{xx}$ is ill-conditioned (\autoref{fig:low_rank}A), and classical whitening leads to noise amplification along axes with small variance. 
In this section, we show how a simple modification to the objective allows our gain-modulating network to handle these types of inputs.

We prevent amplification of inputs below a certain variance threshold by replacing the unit marginal variance equality constraints with upper bound constraints\footnote{
We set the threshold to 1 to remain consistent with the whitening objective, but it can be any arbitrary variance.}:
\begin{align}\label{eq:ineq_constraint}
    \langle(\rvw_i^\top\rvy_t)^2\rangle_t\le1\quad\text{for}\quad i=1,\dots,K.
\end{align}
Our modified network objective then becomes
\begin{align}\label{eq:symmetricupper}
    \min_{\{\rvy_t\}}\langle\|\rvx_t-\rvy_t\|_2^2\rangle_t\quad\text{s.t.}\quad\text{\autoref{eq:ineq_constraint} holds.}
\end{align}
Intuitively, if the projected variance along a given direction is already less than or equal to unity, then it will not affect the overall loss.
Interneuron gain should accordingly \textit{stop adjusting} once the marginal variance along its projection axis is less than or equal to one.
To enforce these upper bound constraints, we introduce gains as Lagrange multipliers, but restrict the domain of $\rvg$ to be the non-negative orthant $\R_+^K$, resulting in non-negative optimal gains:
\begin{align}\label{eq:Lagrange_nn}
    \max_{\rvg \in \R^K_+}\min_{\{\rvy_t\}}\langle\ell(\rvx_t,\rvy_t,\rvg)\rangle_t,
\end{align}
where $\ell(\rvx,\rvy,\rvg)$ is defined as in \autoref{eq:Lagrange}.
At each time step $t$, we optimize \autoref{eq:Lagrange_nn} by first taking gradient-descent steps with respect to $\rvy_t$, resulting in the same neural dynamics (\autoref{eq:dydtau}) and equilibrium solution (\autoref{eq:y_steadystate}) as before. 
To update $\rvg$, we modify \autoref{eq:gupdate} to take a \textit{projected} gradient-ascent step with respect to $\rvg$:
\begin{align}\label{eq:grectified}
    \rvg \leftarrow \lfloor \rvg + \eta (\bar{\rvz}_t^{\circ 2} - {\bf 1}) \rfloor
\end{align}
where $\lfloor \cdot \rfloor$ denotes the element-wise half-wave rectification operation that projects its inputs onto the non-negative orthant $\R_+^K$, i.e., $\lfloor \rvv\rfloor := [\max(v_1,0),\dots,\max(v_K,0)]^\top$.

\autoref{fig:low_rank} shows a simulation of a network whitening ill-conditioned inputs with an Optimized frame ($N{=}2$, $K{=}K_N$; see Sec.~\ref{ssec:convergence}) where gains are either unconstrained (\autoref{eq:gupdate}), or rectified (\autoref{eq:grectified}). 
We observe that these two models converge to two different solutions (\autoref{fig:low_rank}B, C).
When $g_i$ is unconstrained, the network achieves global whitening, as before, but in doing so it amplifies noise along the axis orthogonal to the latent signal axis.
The gains constrained to be non-negative converged to different values than the unconstrained gains (\autoref{fig:low_rank}D), with one of them (green) converging to zero rather than becoming negative.
In general, with constrained $g_i$, the whitening error network converges to a non-zero value (see \autoref{appendix:nonneg} for details).
Thus, with a non-negative constraint, the network normalizes the responses $\rvy$,  and \textit{does not amplify the noise}.
In \autoref{appendix:nonneg} we show additional cases that provide further geometric intuition on differences between symmetric whitening with and without non-negative constrained gains.

\subsection{Gain modulation enables local spatial decorrelation}\label{ssec:conv}

\begin{figure*}[htb]
\begin{center}
\centerline{\includegraphics[width=\textwidth]{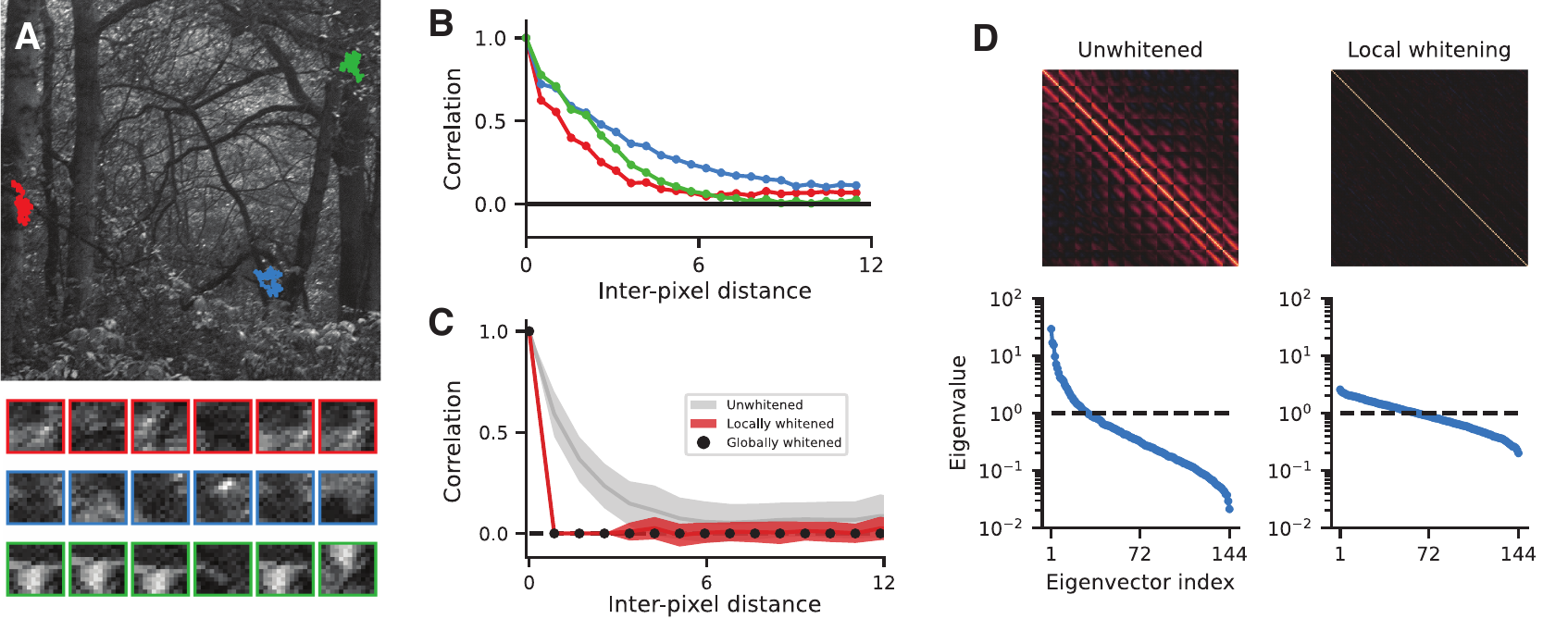}}
    \caption{ 
    Local spatial whitening.
{\bf A)} Large grayscale image from which $12{\times}12$ image patch samples are drawn.  Colors represent random-walk sampling from regions of the image corresponding to contexts with different underlying statistics.
Six samples from each context are shown below.
{\bf B)} Without whitening, pixel correlations decay rapidly with spatial distance in each context, suggesting that local whitening may be effective.
{\bf C)} Binned pairwise output pixel correlation of patches from the red context before (gray) and after global (black dots) vs.\ local whitening with overlapping $4{\times}4$ neighborhoods (red). 
Shaded regions represent standard deviations.
{\bf D)} Top: Correlation matrices of flattened patches from the red context before whitening (left), and after local symmetric whitening (right). 
Both panels use the same color scale.
Bottom: Corresponding covariance eigenspectra. Dashed lines are spectra after global whitening.
}
\label{fig:conv2d}
\end{center}
\vskip -0.4in
\end{figure*}

Requiring $K_N$ interneurons to guarantee a statistically white output (Proposition~\ref{prop:marginal}) becomes prohibitively costly for high-dimensional inputs: the number of interneurons scales as $\mathcal{O}(N^2)$.
This leads us to ask: how many interneurons are needed in practice?
For natural sensory inputs such as images, it is well-known that inter-pixel correlation is highly structured, decaying as a function of distance.
We simulate an experiment of visual gaze fixations and micro-saccadic eye movements using a Gaussian random walk, drawing $12{\times}12$ patch samples from a region of a natural image \citep[\autoref{fig:conv2d}A; ][]{hateren_schaaf_1998}; this can be interpreted as a form of video-streaming dataset where each frame is a patch sample.
We repeat this for different randomly selected regions of the image (\autoref{fig:conv2d}A colors).
The image content of each region is quite different, but the inter-pixel correlation within each context consistently falls rapidly with distance (\autoref{fig:conv2d}B).

We \textit{relax} the $\mathcal{O}(N^2)$ marginal variance constraint to instead whiten \textit{spatially local neighborhoods} of primary neurons whose inputs are the image patches.
We construct a frame $\rmW$ that exploits spatial structure in the image patches, and spans $K<K_N$ axes in $\R^N$.
$\rmW$ is convolutional, such that \textit{overlapping} neighborhoods of $4 \times 4$ primary neurons are decorrelated, each by a population of interneurons that is `overcomplete' with respect to that neighborhood (see \autoref{appendix:conv} for details).
Importantly, taking into account local structure dramatically reduces the interneuron complexity from $\mathcal{O}(N^2) \rightarrow \mathcal{O}(N)$, thereby making our framework practically feasible for high-resolution image inputs and video streams.
This frame is still overcomplete ($K>N$), but because $K<K_N$, we no longer guarantee at equilibrium that $\rmC_{yy}=\rmI_N$ (Proposition~\ref{prop:marginal}).

After the network converges to the inputs drawn from the red context (\autoref{fig:conv2d}C): i) inter-pixel correlations drop within the region specified by the local neighborhood; and ii) surprisingly, correlations at longer-range (i.e. outside the window of the defined spatial neighborhood) are also dramatically reduced.
Accordingly, the eigenspectrum of the locally whitened outputs is significantly flatter compared to the inputs (\autoref{fig:conv2d}D left vs.\ right columns).
We also provide an example using 1D inputs in \autoref{appendix:conv}.
This empirical result is not obvious --- that whitening individual \textit{overlapping} \textit{local} neighborhoods of neurons should produce a more \textit{globally} whitened output covariance.
Indeed, exactly how or when a globally whitened solution is possible from whitening of spatial overlapping neighborhoods of the inputs is a problem worth pursuing.

\vspace{-1.1em}

\section{Related Work}


\subsection{Biologically plausible whitening networks}

Biological circuits operate in the online setting and, due to physical constraints, must learn exclusively using local signals.
Therefore, to plausibly model neural computation, a neural network model must operate in the online setting (i.e., streaming data) and use local learning rules \citep{pehlevan_neuroscience-inspired_2019}.
There are a few existing normative models of adaptive statistical whitening and related transformations; however, these models use synaptic plasticity mechanisms (i.e., changing $\rmW$) to adapt to changing input statistics \citep{pehlevan2015normative,westrick2016pattern,chapochnikov2021normative,mlynarskiefficient2021, lipshutz2022normative}.
Adaptation of neural population responses to changes in sensory input statistics occurs rapidly, on the order of hundreds of milliseconds to seconds \citep{muller1999rapid,wanner2020whitening}, so it could potentially arise from short-term synaptic plasticity, which operates on the timescale of tens of milliseconds to minutes \citep{zucker2002short}, but not by long-term synaptic plasticity, which operates on the timescale of minutes or longer \citep{martin2000synaptic}. 
Here, we have proposed an alternative hypothesis: that modulation of neural gains, which operates on the order of tens of milliseconds to minutes \citep{ferguson2020mechanisms}, facilitates rapid adaptation of neural populations to changing input statistics.

%

\subsection{Tomography and ``sliced'' density measurements}

Leveraging 1D projections to compute the symmetric whitening transform is reminiscent of approaches taken in the field of tomography.
Geometrically, our method represents an ellipsoid (i.e., the $N$ dimensional covariance matrix) using noisy 1D projections of the ellipsoid onto axes spanned by frame vectors (i.e., estimates of the marginal variances).
This is a special case of reconstruction problems studied in geometric tomography \citep{karl_reconstructing_1994,gardner1995geometric}.
A distinction between tomography and our approach to symmetric whitening is that we are not reconstructing the multi-dimensional inputs; instead, we are utilizing the univariate measurements to transform an ellipsoid into a hyper-sphere.

In optimal transport, ``sliced'' methods offer a way to measure otherwise intractable $p$-Wasserstein distances in high dimensions \citep{bonneel2015sliced}, thereby enabling their use in optimization loss functions.
Sliced methods estimate Wasserstein distance by taking series of 1D projections of two densities, then computing the expectation over all 1D Wasserstein distances, for which there exists an analytic solution.
The 2-Wasserstein distance between a 1D zero-mean Gaussian with variance $\sigma^2$ and a standard normal density is
\begin{align*}
W_2\left(\mathcal{N}\left(0, \sigma^2\right) ; \mathcal{N}\left(0, 1\right)\right) = \left\|\sigma-1\right\|.
\end{align*}
This is strikingly similar to \autoref{eq:gupdate}.
However, distinguishing characteristics of our approach include:
1) minimizing distance between \textit{variances} rather than standard deviations;
2) directions along which we compute slices are fixed, whereas sliced methods compute a new set of projections at each optimization step;
3)  our network operates online, \textit{without} backpropagation.

\section{Discussion}

Our study introduces a recurrent circuit for adaptive whitening using \textit{gain modulation} to transform joint second-order statistics of their inputs based on \textit{marginal} variance measurements.
We demonstrate that, given sufficiently many marginal measurements along unique axes, the network produces symmetric whitened outputs.
Our objective (\autoref{eq:symmetricobjective}) provides a novel way to think about the classical problem of statistical whitening, and draws connections to old concepts from tomography and transport theory.
This framework is \textit{flexible and extensible}, with some possible generalizations explored in \autoref{appendix:applications}.
For example, we show that our model provides a way to prevent representational collapse in the analytically tractable example of online principal subspace learning (Appendix \ref{ssec:principal_subspace}).
Additionally, by replacing the unity marginal variance constraint by a set of target variances differing from 1, the network can be used to transform its input density to one matching the corresponding (non-white) covariance (Appendix \ref{ssec:cov_homeo}).

\subsection{Implications for machine learning}
Decorrelation and whitening are canonical transformations in signal processing, widely used in compression and channel coding. 
Deep nets are generally not trained to whiten, although their response variances are generally normalized during training through batch normalization, and recent methods \citep[e.g.][]{bardes2021vicreg} do impose global whitening properties in their objective functions.
Modulating feature gains has proven effective in adapting pre-trained neural networks to novel inputs with out-of-training distribution statistics \citep{balle_nonlinear_2020, duong_multi-rate_2022, mohan_adaptive_2021}.
Future architectures may benefit from adaptive run-time adjustments to changing input statistics \citep[e.g.][]{hu2021lora}. 
Our framework provides an unsupervised, online mechanism that avoids `catastrophic forgetting' in neural networks during continual learning.

\subsection{Implications for neuroscience}
It has been known for nearly 100 years \citep{adrian1928action} that single neurons rapidly adjust their sensitivity (gain) adaptively, based on recent response history.
Experiments suggest that neural populations \textit{jointly} adapt, adjusting both the amplitude of their responses, as well as their correlations \citep[e.g.][]{benucci2013adaptation, friedrich2013neuronal} to confer dynamic, efficient multi-channel coding.
The natural thought is that they achieve this by adjusting the strength of their interactions (synaptic weights).
Our work provides a \textit{fundamentally different} solution: these effects can arise solely through gain changes, thereby generalizing rapid and reversible single neuron adaptive gain modulation to the level of a neural population.

Support for our model will ultimately require careful experimental measurement and analysis of responses and gains of neurons in a circuit during adaptation \citep[e.g.][]{wanner2020whitening}. 
Our model predicts: 
1) Specific architectural constraints, such as reciprocally connected interneurons \citep{kepecs_fishell2014}, with consistency between their connectivity and population size (e.g.\ in the olfactory bulb). 
2) Synaptic strengths that remain stable during adaptation, which would adjudicate between our model and more conventional adaptation models relying on synaptic plasticity \citep[e.g.][]{lipshutz2022normative}. 
3) Interneurons that modulate their gains according to the difference between the variance of their post-synaptic inputs and some \textit{target variance} (\autoref{eq:gupdate}; also see Appendix \ref{ssec:cov_homeo}).
Experiments could assess whether interneuron input variances converge to the same values after adaptive whitening. 
4) Interneurons that increase their gains with the variance of their inputs (i.e. $\bar{z}_{i,t}^2$).
Input variance-dependent gain modulation may be mediated by changes in slow Na\textsuperscript{+} currents \citep{kim2003slow}.
This predicts a mechanistic role for interneurons during adaptation, and complements the observed gain effects found in excitatory neurons described in classical studies \citep{fairhallefficiency2001,nagel_temporal_2006}. 

\subsection{Conclusion}

Whitening is an effective constraint for preventing feature collapse in representation learning \citep{zbontar2021barlow, ermolov2021whitening}.
The networks developed here provide a whitening solution that is particularly well-suited for applications prioritizing streaming data and low-power consumption.

\section*{Acknowledgements}
We thank Pierre-Étienne Fiquet, Sina Tootoonian, Michael Landy, and the ICML reviewers for their helpful feedback.

\clearpage

\bibliography{references}
\bibliographystyle{icml2023}

\newpage
\appendix
\onecolumn

\section{Notation}
\label{appendix:notation}

For $N\ge2$, let $K_N:=N(N+1)/2$. Let $\R^N$ denote $N$-dimensional Euclidean space equipped with the Euclidean norm, denoted $\|\cdot\|_2$.
Let $\R_+^N$ denote the non-negative orthant in $\R^N$. 
Given $K\ge 2$, let $\R^{N\times K}$ denote the set of $N\times K$ real-valued matrices. Let $\sS^N$ denote the set of $N\times N$ symmetric matrices and let $\sS_{++}^N$ denote the set of $N\times N$ symmetric positive definite matrices.

Matrices are denoted using bold uppercase letters (e.g., $\rmM$) and vectors are denoted using bold lowercase letters (e.g., $\rvv$).
Given a matrix $\rmM$, $M_{ij}$ denotes the entry of $\rmM$ located at the $i$\textsuperscript{th} row and $j$\textsuperscript{th} column.
Let ${\bf 1}=[1,\dots,1]^\top$ denote the $N$-dimensional vector of ones.
Let $\rmI_N$ denote the $N\times N$ identity matrix.

Given vectors $\rvv,\rvw\in\R^N$, define their Hadamard product by $\rvv\circ\rvw:=(v_1w_1,\dots,v_Nw_N)\in\R^N$.
Define $\rvv^{\circ 2}:=(v_1^2,\dots,v_N^2)\in\R^N$.

Let $\langle\cdot\rangle_t$ denote expectation over $t=1,2,\dots$.

The $\diag{\cdot}$ operator, similar to \texttt{numpy.diag()} or MATLAB's \texttt{diag()}, can either: 1) map a vector in $\mathbb{R}^K$ to the diagonal of a $K \times K$ zeros matrix; or 2) map the diagonal entries of a $K \times K$ matrix to a vector in $\mathbb{R}^K$.
The specific operation being used should be clear by context.
For example, given a vector $\rvv\in\R^K$, define $\text{diag}(\rvv)$ to be the $K\times K$ diagonal matrix whose $(i,i)$\textsuperscript{th} entry is equal to $v_i$, for $i=1,\dots,K$.
Alternatively, given a sqaure matrix $\rmM\in\R^{K\times K}$, define $\text{diag}(\rmM)$ to be the $K$-dimensional vector whose $i$\textsuperscript{th} entry is equal to $M_{ii}$, for $i=1,\dots,K$.

\section{Optimal Solution to Symmetric Whitening Objective}\label{apdx:optimal}

In this section, we prove that the optimal solution to the optimization problem in \eqref{eq:symmetricobjectivevanilla} is given by $\rvy_t=\rmC_{xx}^{-1/2}\rvx_t$ for $t=1,\dots,T$ (we treat the case that $T<\infty$). 

We first recall Von Neumann's trace inequality \citep[see, e.g., ][Theorem 3.1]{carlsson2021neumann}.



\begin{lemma}[Von Neumann’s trace inequality]
Suppose $\rmA,\rmB\in\R^{n\times m}$ with $n\le m$. Let $\sigma_1^A\ge\cdots\ge\sigma_n^A\ge0$ and $\sigma_1^B\ge\cdots\ge\sigma_n^B\ge0$ denote the respective singular values of $\rmA$ and $\rmB$. Then
\begin{align*}
    \Tr(\rmA\rmB^\top)\le\sum_{i=1}^n\sigma_i^A\sigma_i^B.
\end{align*}
Furthermore, equality holds if and only if $\rmA$ and $\rmB$ share left and right singular vectors.
\end{lemma}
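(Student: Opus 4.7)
The plan is to reduce to square matrices, apply the SVD, and then combine the AM--GM inequality with the Birkhoff--von Neumann theorem and the rearrangement inequality.

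First I would reduce to the case $n=m$ by padding $\rmA$ and $\rmB$ with $m-n$ zero rows. This leaves $\Tr(\rmA\rmB^\top)$ unchanged and appends zero singular values to each spectrum, so neither side of the inequality is affected. Henceforth assume $n=m$, and write the full SVDs $\rmA=\rmU_A\Sigma_A\rmV_A^\top$ and $\rmB=\rmU_B\Sigma_B\rmV_B^\top$ with diagonal $\Sigma_A,\Sigma_B$. By the cyclic property of the trace, $\Tr(\rmA\rmB^\top)=\Tr(\Sigma_A\rmX\Sigma_B\rmY)$, where $\rmX:=\rmV_A^\top\rmV_B$ and $\rmY:=\rmU_B^\top\rmU_A$ are orthogonal.

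The key estimate is to expand entrywise, $\Tr(\rmA\rmB^\top)=\sum_{i,j}\sigma_i^A\sigma_j^B\,X_{ij}Y_{ji}$, and to apply AM--GM in the form $X_{ij}Y_{ji}\le\tfrac12(X_{ij}^2+Y_{ji}^2)$. Setting $P_{ij}:=X_{ij}^2$ and $Q_{ij}:=Y_{ij}^2$ produces two doubly stochastic matrices, because the rows and columns of each orthogonal matrix have unit Euclidean norm, so the bound becomes $\Tr(\rmA\rmB^\top)\le\tfrac12\sum_{i,j}\sigma_i^A\sigma_j^B(P_{ij}+Q_{ji})$. By the Birkhoff--von Neumann theorem, each of $P$ and $Q^\top$ is a convex combination of permutation matrices, and for any permutation $\pi$ the rearrangement inequality gives $\sum_i\sigma_i^A\sigma_{\pi(i)}^B\le\sum_i\sigma_i^A\sigma_i^B$ since both spectra are sorted non-increasingly. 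Averaging yields the same bound on each of the two sums, and hence $\Tr(\rmA\rmB^\top)\le\sum_{i=1}^n\sigma_i^A\sigma_i^B$.

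For the equality characterization, I would trace backwards through the three inequalities. AM--GM forces $X_{ij}=Y_{ji}$ on every index pair with $\sigma_i^A\sigma_j^B>0$; the rearrangement step forces the doubly stochastic matrices $P$ and $Q$ to be supported only on permutations that rearrange within blocks of equal singular values. Together these constraints, combined with the freedom in the SVD within degenerate singular-value subspaces, allow the unitary factors of $\rmA$ and $\rmB$ to be chosen so that they coincide on the range spanned by the positive singular values, i.e.\ $\rmA$ and $\rmB$ share left and right singular vectors. I expect the main obstacle to be precisely this final equality analysis, since one must carefully exploit the non-uniqueness of the SVD in the presence of repeated singular values and zero singular values; the inequality itself is essentially a one-line consequence of AM--GM combined with Birkhoff and rearrangement.
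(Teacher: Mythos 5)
The paper does not prove this lemma at all --- it is quoted from the literature (Carlsson, Theorem 3.1) and used as a black box in Appendix B --- so there is no in-paper proof to compare against. Your argument for the inequality itself is correct and is the standard Birkhoff--von Neumann route: the padding reduction to square matrices is sound, the identity $\Tr(\rmA\rmB^\top)=\sum_{i,j}\sigma_i^A\sigma_j^B X_{ij}Y_{ji}$ with $\rmX=\rmV_A^\top\rmV_B$, $\rmY=\rmU_B^\top\rmU_A$ is right, the squared entries of an orthogonal matrix do form a doubly stochastic matrix, and AM--GM plus Birkhoff plus rearrangement closes the bound. Note that for the paper's purposes (Appendix B) only this inequality and the trivial ``if'' direction of the equality statement are needed, since the argument there just exhibits the maximizer $\rmY=\sqrt{T}\,\rmU_x\rmV_x^\top$.

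The equality characterization in your proposal is, as you anticipate, where the gaps are. Two points need real work. First, ``the rearrangement step forces $P$ and $Q$ to be supported only on permutations that rearrange within blocks of equal singular values'' is not quite the right statement: equality of $\sum_i\sigma_i^A\sigma_{\pi(i)}^B$ with the sorted sum constrains $\pi$ through \emph{both} spectra jointly, and since the Birkhoff decomposition is non-unique you should argue via the face of the Birkhoff polytope on which the linear functional $P\mapsto\sum_{i,j}\sigma_i^A\sigma_j^B P_{ij}$ is maximized. Second, the final step --- passing from ``$X_{ij}=Y_{ji}$ on the support where $\sigma_i^A\sigma_j^B>0$, plus block structure of $P,Q$'' to ``the SVDs can be chosen with common $\rmU,\rmV$'' --- is asserted rather than proved; one must actually show that the orthogonal mismatch matrices restrict to equal orthogonal blocks on the degenerate singular-value subspaces and can then be absorbed into the SVD factors, with separate care for the kernel (zero singular values). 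As written, the equality direction is a plausible sketch, not a proof.
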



We can now proceed with the proof of our result. We first concatenate the inputs and outputs into data matrices $\rmX=[\rvx_1,\dots,\rvx_T]\in\R^{N\times T}$ and $\rmY=[\rvy_1,\dots,\rvy_T]\in\R^{N\times T}$. We can write \eqref{eq:symmetricobjectivevanilla} as follows:
\begin{align*}
    \min_{\rmY}\|\rmX-\rmY\|_F^2\qquad\text{subject to}\qquad\rmY\rmY^\top=T\rmI_N.
\end{align*}
Expanding, substituting in with the constraint $\rmY\rmY^\top=T\rmI_N$ and dropping terms that do not depend on $\rmY$ results in the objective
\begin{align*}
    \max_{\rmY}\Tr(\rmX\rmY^\top)\qquad\text{subject to}\qquad\rmY\rmY^\top=T\rmI_N.
\end{align*}
By Von Neumann’s trace inequality, the trace is maximized when the singular vectors of $\rmY$ are aligned with the singular vectors of $\rmX$. In particular, if the SVD of $\rmX$ is given by $\rmX=\rmU_x\rmS_x\rmV_x^\top$, then the optimal $\rmY$ is given by $\rmY=\sqrt{T}\rmU_x\rmV_x^\top$, which is precisely $\rmC_{xx}^{-1/2}\rmX$, where $\rmC_{xx}:=\tfrac1T\rmX\rmX^\top=\rmU_x\rmS_x^2\rmU_x^\top$.

\section{Proof of Proposition \ref{prop:marginal}}\label{appendix:num_projections}



\begin{proof}[Proof of Proposition \ref{prop:marginal}]
Suppose \autoref{eq:Cyy} holds. Then, for $i=1,\dots,K$,
\begin{align*}
    \langle(\rvw_i^\top\rvy_t)^2\rangle_t=\langle\rvw_i^\top\rvy_t\rvy_t^\top\rvw_i\rangle_t=\rvw_i^\top\rvw_i=1.
\end{align*}
Therefore, \autoref{eq:equivalence} holds. 

Now suppose \autoref{eq:equivalence} holds. Let $\rvv\in\R^N$ be an arbitrary unit vector. Then $\rvv\rvv^\top\in\sS^N$ and by \autoref{eq:spanSN}, there exist $g_1,\dots,g_K\in\R$ such that
\begin{align}\label{eq:vwspan}
    \rvv\rvv^\top=g_1\rvw_1\rvw_1^\top+\cdots+g_K\rvw_K\rvw_K^\top.
\end{align}
We have
\begin{align}\label{eq:v}
    \rvv^\top\langle\rvy_t\rvy_t^\top\rangle_t\rvv&=\Tr(\rvv\rvv^\top\langle\rvy_t\rvy_t^\top\rangle_t)=\sum_{i=1}^Kg_i\Tr(\rvw_i\rvw_i^\top\langle\rvy_t\rvy_t^\top\rangle_t)=\sum_{i=1}^Kg_i\Tr(\rvw_i\rvw_i^\top)=\Tr(\rvv\rvv^\top)=1.
\end{align}
The first equality is a property of the trace operator. The second and fourth equalities follow from \autoref{eq:vwspan} and the linearity of the trace operator. The third equality follows from \autoref{eq:equivalence}, the cyclic property of the trace, and the fact that each $\rvw_i$ is a unit vector. The final equality holds because $\rvv$ is a unit vector. Since \autoref{eq:v} holds for every unit vector $\rvv\in\R^N$, \autoref{eq:Cyy} holds.
\end{proof}



\section{Frame Factorizations of Symmetric Matrices}
\label{appendix:psd_frame}

\subsection{Analytic solution for the optimal gains}

Recall that the optimal solution of the symmetric objective in \autoref{eq:symmetricobjective} is given by $\rvy_t=\rmC_{xx}^{-1/2}\rvx_t$ for $t=1,2,\dots$. 
In our neural circuit with interneurons and gain control, the outputs of the primary neurons at equilibrium is (given in \autoref{eq:y_steadystate}, but repeated here for clarity),
\begin{align*}
\bar{\rvy}_t = \left[\rmI_N+\rmW\diag{\rvg}\rmW^\top \right]^{-1}\rvx_t,
\end{align*}
where $\rmW \in \R^{N\times K}$ is overcomplete, arbitrary (provided \autoref{eq:spanSN} holds), and \textit{fixed}; and elements of $\rvg\in\R^K$ can be interpreted as learnable scalar gains.
The circuit performs symmetric whitening when the gains $\rvg$ satisfy the relation
\begin{align}\label{eq:g_covx}
    \rmI_N+\rmW\diag{\rvg}\rmW^\top=\rmC_{xx}^{1/2}.
\end{align}
It is informative to contrast this with conventional approaches to symmetric whitening, which rely on eigendecompositions,
\begin{align*}
    \rmV\diag{{\boldsymbol \lambda}}^{1/2}\rmV^\top=\rmC_{xx}^{1/2},
\end{align*}
where $\rmV \in \R^{N \times N}$ and ${\boldsymbol \lambda}$ are the eigenvectors and eigenvalues of $\rmC_{xx}$, respectively.
Note that in this eigenvector formulation, both vector quantities (columns of $\rmV$) and scalar quantities (elements of ${\boldsymbol \lambda}$) need to be learned, whereas in our formulation (\autoref{eq:g_covx}), \textit{only scalars} need to be learned (elements of $\rvg$).

When $K \geq N(N+1)/2$, we can explicitly solve for the optimal gains $\rvg^\ast$ (derived in the next subsection):
\begin{align}\label{eq:g_opt}
    \rvg^\ast=\left[\left(\rmW^\top\rmW\right)^{\circ 2}\right]^{\dagger}\left[\rvw_1^\top\rmC_{xx}^{1/2}\rvw_1-1,\dots,\rvw_K^\top\rmC_{xx}^{1/2}\rvw_K-1\right]^\top.
\end{align}

\subsection{Isolating $\rvg$ embedded in a diagonal matrix}

In the upcoming subsection, our variable of interest, $\rvg$, is embedded along the diagonal of a matrix, then wedged between two fixed matrices, i.e. $\rmA_1 \diag{\rvg} \rmA_2$.
We employ the following identity to isolate $\rvg$,
\begin{align}\label{eq:diag_trick}
\diag{\rmA_1 \diag{\rvg} \rmA_2} = \left(\rmA_1 \circ \rmA_2^\top\right) \rvg,
\end{align}
where, on the left-hand-side, the inner $\diag{\cdot}$ forms a diagonal matrix from a vector, the outer $\diag{\cdot}$ returns the diagonal of a matrix as a vector, and $\circ$ is the element-wise Hadamard product.

\subsection{Deriving optimal gains}

Let $\rmC\in\mathbb{S}^N$, where $\mathbb{S}^N$ is the set of symmetric $N\times N$ matrices. Suppose $\rvg\in\R^K$ is such that the following holds:
\begin{align}
     \rmW\diag{\rvg} \rmW^\top &= \rmC\label{eq:decomp}
\end{align}
where $\rmW \in \mathbb{R}^{N \times K}$ is some fixed, arbitrary, frame with $K \geq\frac{N(N+1)}{2}$ (i.e. a representation that is $\mathcal{O}(N^2)$ overcomplete).
To solve for $\rvg$, we multiply both sides of \autoref{eq:decomp} from the left and right by $\rmW^\top$ and $\rmW$, respectively, then take the diagonal\footnote{Similar to commonly-used matrix libraries, the $\diag{\cdot}$ operator here is overloaded and can map a vector to a matrix or vice versa. See \autoref{appendix:notation} for details.} of the resultant matrices,
\begin{align}
     \diag{\rmW^\top \rmW \diag{\rvg} \rmW^\top \rmW }&= \diag{\rmW^\top \rmC \rmW}. \label{eq:pre_simplification}
\end{align}
Finally, employing the identity in \autoref{eq:diag_trick} yields 
\begin{align}
    (\rmW^\top \rmW)^{\circ 2} \rvg  &= \diag{\rmW^\top \rmC \rmW}, \label{eq:linear_system}\\
    \rvg &= \left[(\rmW^\top \rmW)^{\circ 2}\right]^{\dagger} \diag{\rmW^\top \rmC \rmW},
\end{align}
where $(\cdot)^{\circ 2}$ denotes element-wise squaring, $(\rmW^\top \rmW)^{\circ 2}$ is positive semidefinite by the Schur product theorem and $(\cdot)^\dagger$ denotes the Moore-Penrose pseudoinverse.
Thus, \textit{any} $N \times N$ symmetric matrix, 
can be encoded as a vector, $\rvg$, with respect to an arbitrary fixed frame, $\rmW$, by solving a standard linear system of $K$ equations of the form $\rmA\rvg = \rvb$.
Importantly, when $K=\frac{N(N+1)}{2}$ and the columns of $\rmW$ are not collinear, we have empirically found the matrix on the LHS, $(\rmW^\top\rmW)^{\circ 2}$, to be positive definite, so the vector $\rvg$ is uniquely defined.

Without loss of generality, assume that the columns of $\rmW$ are unit-norm (otherwise, we can always normalize them by absorbing their lengths into the elements of $\rvg$).
Furthermore, assume without loss of generality that $\rmC \in \mathbb{S}_{++}^N$, the set of all symmetric positive definite matrices (e.g. covariance, precision, PSD square roots, etc.).
When $\rmC$ is a covariance matrix, then $\diag{\rmW^\top \rmC \rmW}$ can be interpreted as a vector of projected variances of $\rmC$ along each axis spanned by $\rmW$.
Therefore, \autoref{eq:linear_system} states that the vector $\rvg$ is linearly related to the vector of projected variances via the element-wise squared frame Gramian, $(\rmW^\top \rmW)^{\circ 2}$.

\section{Saddle Point Property}\label{apdx:saddle}

In this section, we prove the following minimax property (for the case $t=1,\dots,T$ with $T$ finite):
\begin{align}\label{eq:interchange}
    \min_{\{\rvy_t\}}\max_\rvg\langle\ell(\rvx_t,\rvy_t,\rvg)\rangle_t = \max_\rvg\min_{\{\rvy_t\}}\langle\ell(\rvx_t,\rvy_t,\rvg)\rangle_t.
\end{align}
The proof relies on the following minimax property for a function that satisfies the saddle point property \citep[section 5.4]{boyd2004convex}.

\begin{theorem}\label{thm:saddle}
Let $V\subseteq\R^n$, $W\subseteq\R^m$ and $f:V\times W\to\R$. Suppose $f$ satisfies the saddle point property; that is, there exists $(\rva^\ast,\rvb^\ast)\in V\times W$ such that
\begin{align*}
    f(\rva^\ast,\rvb)\le f(\rva^\ast,\rvb^\ast)\le f(\rva,\rvb^\ast),\qquad\text{for all }(\rva,\rvb)\in V\times W.
\end{align*}
Then
\begin{align*}
    \min_{\rva\in V}\max_{\rvb\in W} f(\rva,\rvb)=\max_{\rvb\in W}\min_{\rva\in V} f(\rva,\rvb)=f(\rva^\ast,\rvb^\ast).
\end{align*}
\end{theorem}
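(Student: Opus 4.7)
The plan is to combine the unconditional weak-duality inequality with the two halves of the saddle-point hypothesis, producing a short chain of inequalities in which the maximin, the value $f(\rva^\ast,\rvb^\ast)$, and the minimax are pinched together and therefore must all coincide. First I would establish weak duality, which holds for any function $f:V\times W\to\R$ and requires no structural hypotheses: fixing arbitrary $\rva'\in V$ and $\rvb'\in W$, we have $\min_{\rva\in V} f(\rva,\rvb')\le f(\rva',\rvb')\le \max_{\rvb\in W} f(\rva',\rvb)$; taking the max over $\rvb'$ on the leftmost expression and the min over $\rva'$ on the rightmost expression yields
\begin{align*}
    \max_{\rvb\in W}\min_{\rva\in V} f(\rva,\rvb)\;\le\;\min_{\rva\in V}\max_{\rvb\in W} f(\rva,\rvb).
\end{align*}

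Next I would unpack the two halves of the saddle-point property in order to sandwich $f(\rva^\ast,\rvb^\ast)$ between the minimax and the maximin. The right inequality $f(\rva^\ast,\rvb^\ast)\le f(\rva,\rvb^\ast)$ for all $\rva$ implies $f(\rva^\ast,\rvb^\ast)\le\min_{\rva} f(\rva,\rvb^\ast)\le \max_{\rvb}\min_{\rva} f(\rva,\rvb)$, so $f(\rva^\ast,\rvb^\ast)$ is at most the maximin. Symmetrically, the left inequality $f(\rva^\ast,\rvb)\le f(\rva^\ast,\rvb^\ast)$ for all $\rvb$ yields $\min_{\rva}\max_{\rvb} f(\rva,\rvb) \le \max_{\rvb} f(\rva^\ast,\rvb) \le f(\rva^\ast,\rvb^\ast)$, so $f(\rva^\ast,\rvb^\ast)$ is at least the minimax.

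Stringing these three bounds together produces the chain
\begin{align*}
    \max_{\rvb\in W}\min_{\rva\in V} f(\rva,\rvb)\;\ge\;f(\rva^\ast,\rvb^\ast)\;\ge\;\min_{\rva\in V}\max_{\rvb\in W} f(\rva,\rvb)\;\ge\;\max_{\rvb\in W}\min_{\rva\in V} f(\rva,\rvb),
\end{align*}
forcing all four quantities to coincide with $f(\rva^\ast,\rvb^\ast)$. The argument is essentially mechanical, so there is no substantive obstacle. The only nuance worth flagging is the implicit assumption that the outer extrema are attained rather than merely approached; this is legitimate here because the hypothesis supplies a concrete saddle point $(\rva^\ast,\rvb^\ast)$ at which those extrema are realized, which justifies writing $\min$ and $\max$ (rather than $\inf$ and $\sup$) throughout.
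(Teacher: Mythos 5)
Your proof is correct and is precisely the standard argument (weak duality plus the saddle-point sandwich) that the paper relies on by citation: the paper states this theorem without proof, referring to Boyd and Vandenberghe, section 5.4, where exactly this chain of inequalities appears. Your remark about attainment of the outer extrema at the saddle point is a fair point of care, though note that for $\rva\neq\rva^\ast$ the inner $\max_{\rvb}f(\rva,\rvb)$ need not be attained, so in full generality the inner extrema should be read as $\sup$/$\inf$; this does not affect the validity of the chain.
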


In view of Theorem \ref{thm:saddle}, it suffices to show there exists $(\rvy_1^\ast,\dots,\rvy_T^\ast,\rvg^\ast)$ such that
\begin{align}\label{eq:saddle}
    \ell(\rvy_1^\ast,\dots,\rvy_T^\ast,\rvg) \le\ell(\rvy_1^\ast,\dots,\rvy_T^\ast,\rvg^\ast) \le\ell(\rvy_1,\dots,\rvy_T,\rvg^\ast),\qquad\text{for all }\rvy_1,\dots,\rvy_T\in\R^N\text{ and }\rvg\in\R^K.
\end{align}
Define $\rvy_t^\ast:=\rmC_{xx}^{-1/2}\rvx_t$ for all $t=1,\dots,T$ and define $\rvg^\ast$ as in \eqref{eq:g_opt} so that \eqref{eq:g_covx} holds.
Then, for all $\rvg\in\R^K$,
\begin{align*}
    \ell(\rvy_1^\ast,\dots,\rvy_T^\ast,\rvg)= \frac1T\sum_{t=1}^T\|\rvx_t-\rvy_t^\ast\|_2^2.
\end{align*}
Therefore, the first inequality in \eqref{eq:saddle} holds (in fact it is an equality for all $\rvg$). Next, we have
\begin{align*}
    \ell(\rvy_1,\dots,\rvy_T,\rvg^\ast)&=\frac1T\sum_{t=1}^T\|\rvx_t-\rvy_t\|_2^2+\frac1T\sum_{t=1}^T\Tr\left[\rmW\text{diag}(\rvg^\ast)\rmW^\top(\rvy_t\rvy_t^\top-\rmI_N)\right]\\
    &=\frac1T\sum_{t=1}^T(\rvx_t^\top\rvx_t-2\rvx_t^\top\rvy_t)+\frac1T\sum_{t=1}^T\Tr\left[(\rmI_N+\rmW\text{diag}(\rvg^\ast)\rmW^\top)(\rvy_t\rvy_t^\top-\rmI_N)\right]\\
    &=\frac1T\sum_{t=1}^T(\rvx_t^\top\rvx_t-2\rvx_t^\top\rvy_t)+\frac1T\sum_{t=1}^T\Tr\left[\rmC_{xx}^{1/2}(\rvy_t\rvy_t^\top-\rmI_N)\right]
\end{align*}
Since $\rmC_{xx}^{1/2}$ is positive definite, $\ell(\rvy_1,\dots,\rvy_T,\rvg^\ast)$ is strictly convex in $(\rvy_1,\dots,\rvy_T)$ with its unique minimum obtained at $\rvy_t=\rmC_{xx}^{-1/2}\rvx_t$ for all $t=1,\dots,T$ (to see this, differentiate with respect to $\rvy_1,\dots,\rvy_T$, set the derivatives equal to zero and solve for $\rvy_1,\dots,\rvy_T$). This establishes the second inequality in \eqref{eq:saddle} holds. Therefore, by Theorem \ref{thm:saddle}, \eqref{eq:interchange} holds.

\section{Weighted Average Update Rule for $\rvg$}
\label{appendix:generalizations}
The update for $\rvg$ in \autoref{eq:gupdate} can be generalized to allow for a weighted average over past samples. In particular, the general update is given by
\begin{align*}
    \rvg\gets\rvg+\eta\left(\frac{1}{Z}\sum_{s=1}^t\gamma^{t-s}\rvz_s^{\circ 2}-{\bf 1}\right),
\end{align*}
where $\gamma\in[0,1]$ determines the decay rate and $Z:=1+\gamma+\cdots+\gamma^{t-1}$ is a normalizing factor.

\section{Batched and Offline Algorithms for Whitening with RNNs via Gain Modulation} \label{appendix:algorithms}

In addition to the fully-online algorithm provided in the main text (Algorithm~\ref{alg:online}), we also provide two variants below.
In many applications, streaming inputs arrive in batches rather than one at a time (e.g. video streaming frames).
Similarly for conventional offline stochastic gradient descent training, data is sampled in batches.
Algorithm~\ref{alg:batched} would be one way to accomplish this in our framework, where the main difference between the fully online version is taking the mean across samples in the batch to yield average gain update $\Delta\rvg$ term.
Furthermore, in the fully offline setting when the covariance of the inputs, $\rmC_{xx}$ is known, Algorithm~\ref{alg:offline} presents a way to whiten the covariance directly.

\begin{minipage}{0.46\textwidth}
\begin{algorithm}[H]
\caption{Batched symmetric whitening}
\label{alg:batched}
\begin{algorithmic}[1]
\STATE {\bfseries Input:} Data matrix $\rmX \in \R^{N \times T}$ (centered)
\STATE {\bfseries Initialize:} $\rmW\in\R^{N\times K}$; $\rvg\in\R^K$; $\eta$; batch size $B$\\
\WHILE{not converged} 
\STATE $\rmX_B \gets \texttt{sample\_batch}(\rmX, B)$ 
\STATE $\rmY_B \gets [\rmI_N +\rmW \diag{\rvg} \rmW^\top]^{-1}\rmX_B$ 
\STATE $\rmZ_B \gets \rmW^\top\rmY_B$
\STATE $\Delta \rvg \gets \frac1T\text{diag}(\rmZ_B\rmZ_B^\top) - {\bf 1}$ 
\STATE $\rvg \gets \rvg + \eta \texttt{ mean}(\Delta \rvg, \texttt{axis=}1)$ 
 \ENDWHILE
\end{algorithmic}
\end{algorithm}
\end{minipage}
\hfill
\begin{minipage}{0.46\textwidth}
\begin{algorithm}[H]
\caption{Offline symmetric whitening}
\label{alg:offline}
\begin{algorithmic}[1]
\STATE {\bfseries Input:} Input covariance $\rmC_{xx}$
\STATE {\bfseries Initialize:} $\rmW\in\R^{N\times K}$; $\rvg\in\R^K$; $\eta$ \\
\WHILE{not converged} 
\STATE $\rmM \gets [\rmI_N + \rmW \diag{\rvg} \rmW^\top]^{-1}$
\STATE $\rmC_{yy} \gets \rmM \rmC_{xx} \rmM$
\STATE $\Delta \rvg \gets\diag{\rmW^\top \rmC_{yy} \rmW}-{\bf 1}$
\STATE $\rvg \gets \rvg + \eta \Delta \rvg$
 \ENDWHILE
\end{algorithmic}
\end{algorithm}
\end{minipage}

\section{Normalizing Ill-conditioned Inputs with Non-negative Constrained Gains}\label{appendix:nonneg}

\subsection{Quantifying whitening error}

Whitening with non-negative gains does not, in general, produce an output with identity covariance matrix; therefore, quantifying algorithm performance with the error defined in the main text would not be informative.
Because this extension shares similarities with ideas of regularized whitening, in which principal axes whose eigenvalues are below a certain threshold are unaffected by the whitening transform, we quantify algorithmic performance using thresholded Spectral Error,
\begin{align*}
   \text{Spectral Error}:= \frac 1 N \sum_i^N \max(\lambda_i-1, 0)^2,
\end{align*}
where $\lambda_i$ is the $i$\textsuperscript{th} eigenvalue of $\rmC_{yy}$.
Here, as in the main text, we set the threshold to 1.
\autoref{fig:loss_low_rank} shows that this network reduces spectral error.
Importantly, the converged solution depends on the initial choice of frame (see next subsection).

\begin{figure}[htb]
    \centering
    \includegraphics[width=.75\textwidth]{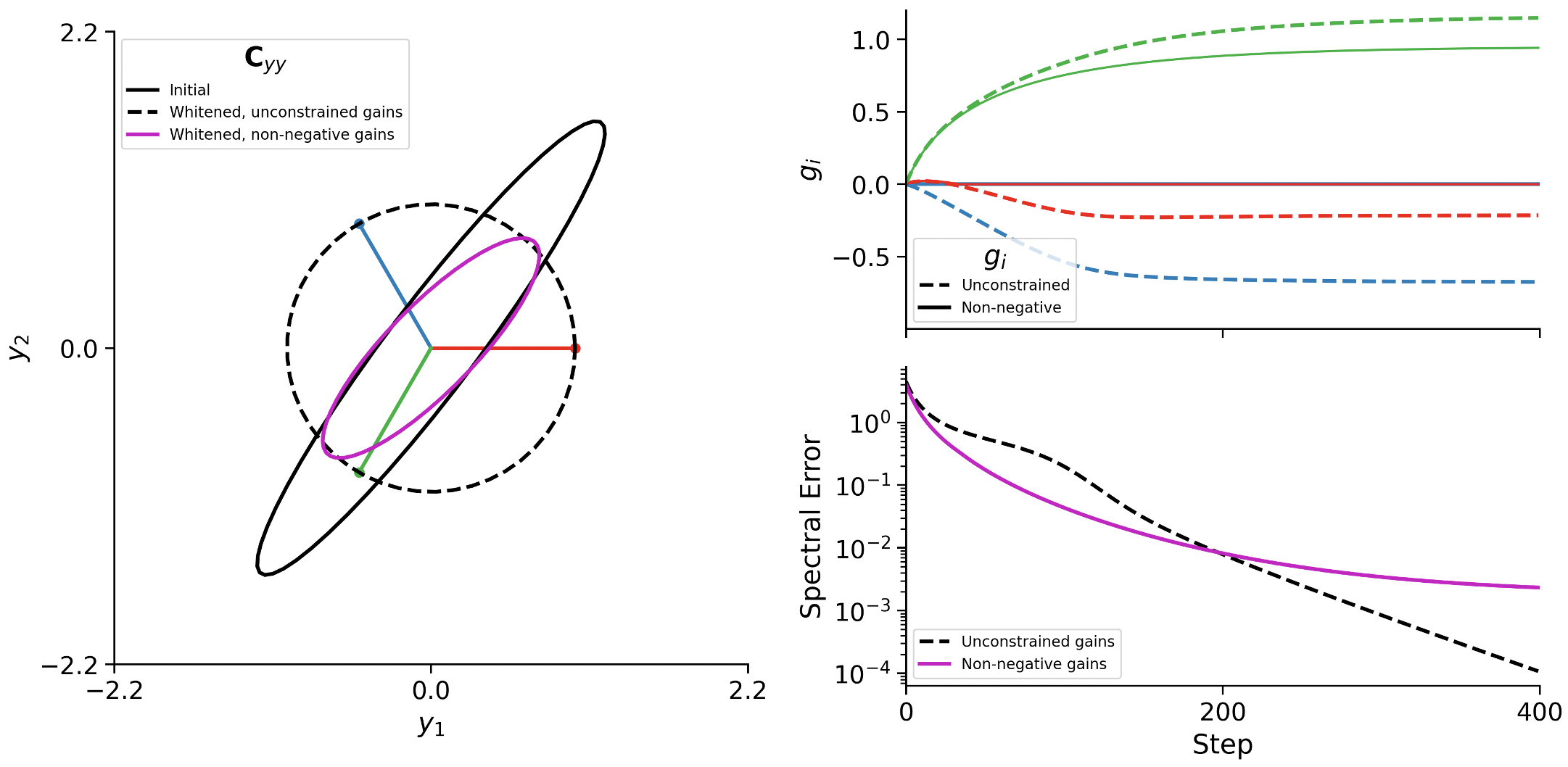}
    \caption{Whitening ill-conditioned inputs with non-negative gains. 
    {\bf A)} An equi-angular frame (red, blue, green; see Sec.~\ref{ssec:convergence}) whitening ill-conditioned inputs. 
    {\bf B)} Gains as algorithm progresses, using updates with either rectified or unrectified constraints.
    {\bf C)} Spectral Error (see text).}
    \label{fig:loss_low_rank}
\end{figure}

\subsection{Geometric intuition behind thresholded whitening with non-negative gains}
In general, the modified objective with rectified gains (\autoref{eq:grectified}) does not statistically whiten the inputs $\rvx_1,\rvx_2,\dots$, but rather adapts the non-negative gains $g_1,\dots,g_K$ to ensure that the variances of the outputs $\rvy_1,\rvy_2,\dots$ in the directions spanned by the frame vectors $\{\rvw_1,\dots,\rvw_K\}$ are bounded above by unity (Figure \ref{fig:nonneg}). 
This one-sided normalization carries interesting implications for how and when the circuit statistically whitens its outputs, which can be compared with experimental observations. 
For instance, the circuit performs symmetric whitening if and only if there are non-negative gains such that \autoref{eq:g_covx} holds (see, e.g., the top right example in Figure \ref{fig:nonneg}), which corresponds to cases such that the matrix $\rmC_{xx}^{1/2}$ is an element of the following cone (with its vertex translated by $\rmI_N$):
\begin{align*}
    \left\{\rmI_N+\sum_{i=1}^Kg_i\rvw_i\rvw_i^\top:\rvg\in\R_+^K\right\}.
\end{align*}
On the other hand, if the variance of an input projection is less than unity --- i.e., $\rvw_i^\top\rmC_{xx}\rvw_i\le1$ for some $i$ --- then the corresponding gain $g_i$ remains zero. When this is true for all $i=1,\dots,K$, the gains all remain zero and the circuit output is equal to its input (see, e.g., the bottom middle panel of Figure \ref{fig:nonneg}).

\begin{figure}[htb]
    \centering
    \includegraphics[width=.6\textwidth]{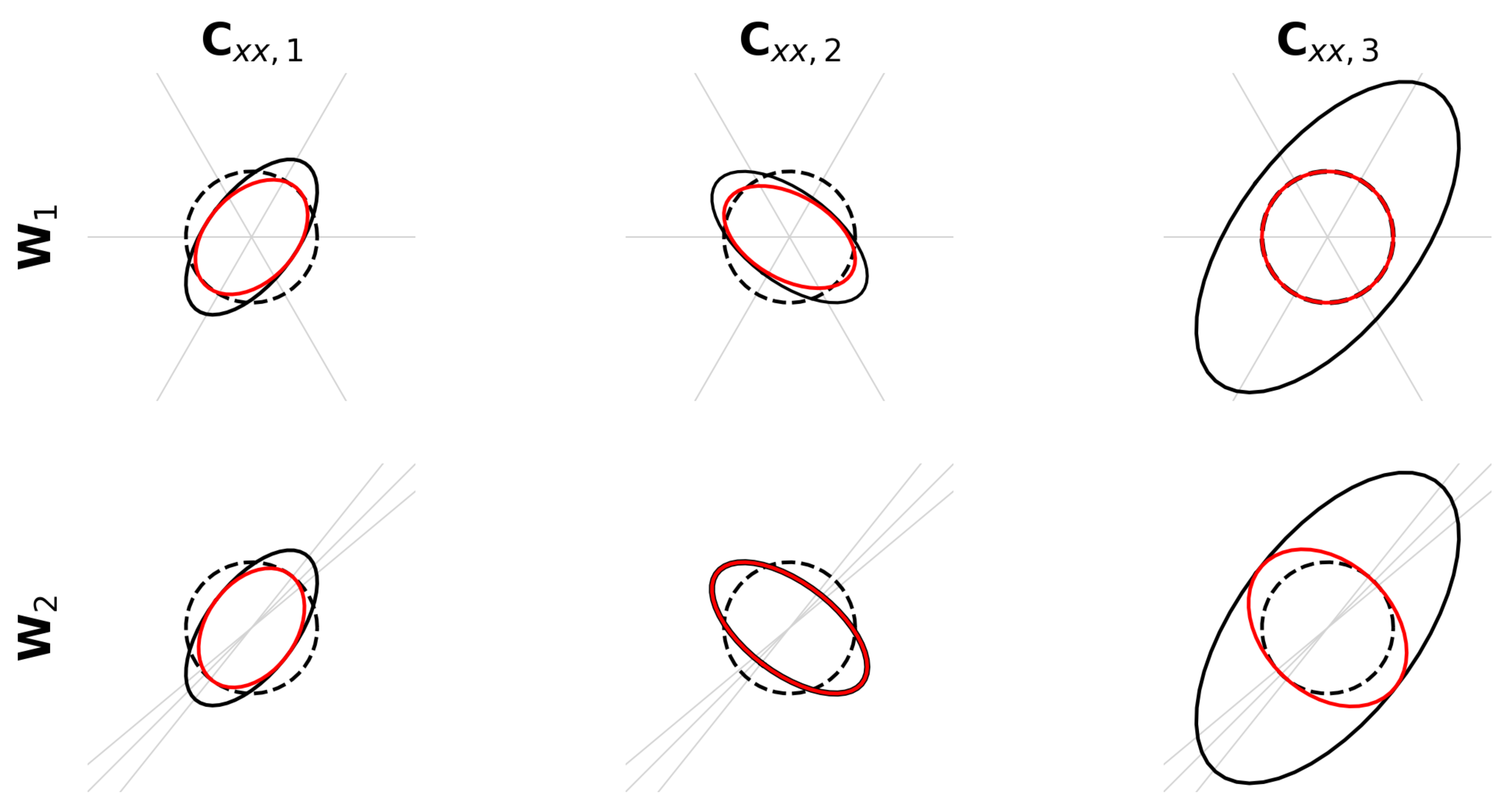}
    \caption{Geometric intuition of whitening with/without inequality constraint.
    Whitening efficacy using non-negative gains depends on $\rmW$ and $\rmC_{xx}$.
For $N=2$ and $K=3$, examples of covariance matrices $\rmC_{yy}$ (red ellipses) corresponding to optimal solutions $\rvy$ of objective \ref{eq:symmetricupper}, for varying input covariance matrices $\rmC_{xx}$ (black ellipses) and frames $\rmW$ (spanning axes denoted by gray lines).
Unit circles, which correspond to the identity matrix target covariance, are shown with dashed lines. 
Each row corresponds to a different frame $\rmW$ and each column corresponds to a different input covariance $\rmC_{xx}$.
    }
    \label{fig:nonneg}
\end{figure}

\section{Whitening Spatially Local Neighborhoods}\label{appendix:conv}

\subsection{Spatially local whitening in 1D}
For an $N$-dimensional input, we consider a network that whitens spatially local neighborhoods of size $M<N$.
To this end, we can construct $N$ filters of the form
\begin{align*}
    \rvw_i=\rve_i,\qquad i=1,\dots,N
\end{align*}
and $M(N-\frac{M+1}{2})$ filters of the form
\begin{align*}
    \rvw_{ij}=\frac{\rve_i+\rve_j}{\sqrt{2}},\qquad i,j=1,\dots,N,\qquad 1\le |i-j|\le M.
\end{align*}
The total number of filters is $(M+1)(N-\frac M2)$, so for fixed $M$ the number of filters scales linearly in $N$ rather than quadratically.

We simulated a network comprising $N=10$ primary neurons, and a convolutional weight matrix connecting each interneuron to spatial neighborhoods of three primary neurons.
Given input data with covariance $\rmC_{xx}$ illustrated in \autoref{fig:conv1d}A (left panel), this modified network succeeded to statistically whiten local neighborhoods of size of primary 3 neurons (right panel).  
Notably, the eigenspectrum (\autoref{fig:conv1d}B) after local whitening is much closer to being equalized.
Furthermore, while the global whitening solution produced a flat spectrum as expected, the local whitening network did not amplify the axis with very low-magnitude eigenvalues (\autoref{fig:conv1d}B right panel).

\begin{figure}[htb]
    \centering
    \includegraphics[width=.65\textwidth]{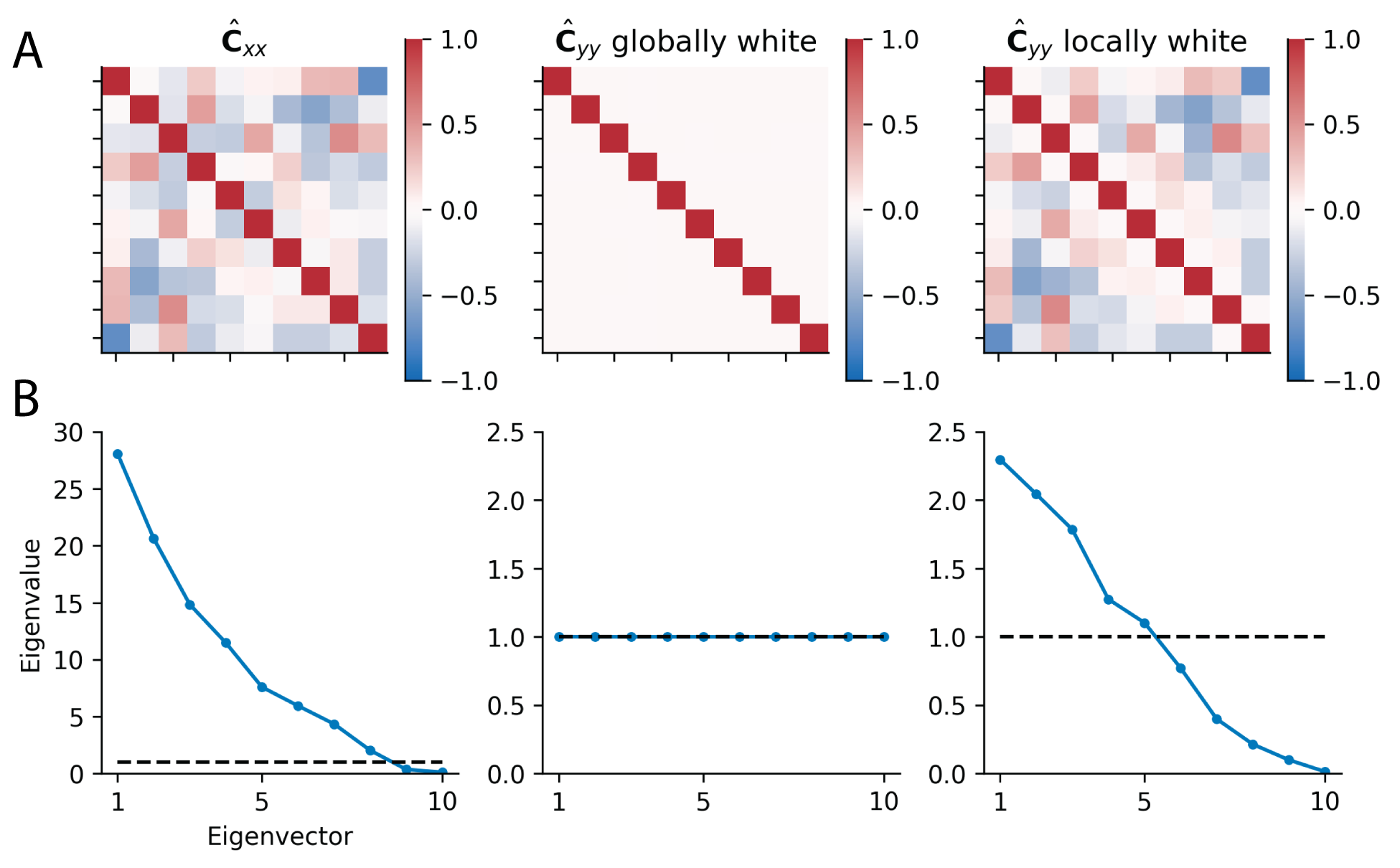}
\caption{ Statistically adapting local neighborhoods of neurons.
{\bf A)} 
$\hat\rmC_{xx}$ denotes correlation matrix, which are shown here for display purposes only, to facilitate comparisons.
Network with 10-dimensional input correlation (left) 10-dimensional output correlation matrix after global whitening (middle); and output correlation matrix after statistically whitening local neighborhoods of size 3.
The output correlation matrix of the locally adapted circuit has block-identity structure along the diagonal.
{\bf B)}  Corresponding eigenspectra of \textit{covariance} matrices of unwhitened (left), global whitened (middle), and locally whitened (right) network outputs.
The y-axis limits of the middle and right columns are the same, but different than the left column.
The black dashed line denotes unity.
} 
\label{fig:conv1d}
\end{figure}

\subsection{Filter bank  construction in 2D}
Here, we describe one way of constructing a set of convolutional weights for overlapping spatial neighborhoods (e.g. image patches) of neurons.
Given an $n \times m$ input and overlapping neighborhoods of size $h\ \times w$ to be statistically whitened, the samples are therefore matrices $X\in\R^{n \times m}$. 
In this case, filters $\rvw \in \R^{1 \times n \times m}$ can be indexed by pairs of pixels that are in the same patch:
\begin{align*}
    ((i,j),(k,\ell)),&&1\le i\le n,\qquad1\le j\le m,\qquad0\le|i-k|\le h,\qquad 0\le|j-\ell|\le w
\end{align*}
We can then construct the filters as,
\begin{align*}
    \rvw_{(i,j),(k,\ell)}(X)=
    \begin{cases}
        x_{i,j}&\text{if }(i,j)=(k,\ell),\\
        \frac{x_{i,j}+x_{k,\ell}}{\sqrt{2}}&\text{if }(i,j)\ne (k,\ell).
    \end{cases}
\end{align*}
In this case there are 
\begin{align*}
nm + wh\left[(n-w)(m-h) + (n-w)\frac{(h+1)}{2} + (m-h)\frac{(w+1)}{2} + (h+1)\frac{(w+1)}{2}\right]
\end{align*}
such filters, so the number of filters required scales linearly with $nm$ rather than quadratically.

\section{Additional Applications}\label{appendix:applications}

\subsection{Preventing representational collapse in online principal subspace learning}\label{ssec:principal_subspace}

Here, similar to \citet{lipshutz2022normative}, we show how whitening can prevent representational collapse using the analytically tractable example of online principal subspace learning.
Recent approaches to self-supervised learning have used decorrelation transforms such as whitening to prevent collapse during training \citep[e.g.][]{zbontar2021barlow}.
Future architectures may benefit from online, adaptive whitening to allow for continual learning and test-time adaptation.

Consider a primary neuron whose \textit{pre-synaptic} input at time $t$ is $\rvs_t \in \mathbb{R}^D$, and corresponding output is $y_t:=\rvv^{\top} \rvs_t$, where $\rvv \in \mathbb{R}^D$ are the synaptic weights connecting the inputs to the neuron. 
An online variant of power iteration algorithm learns the top principal component of the inputs by updating the vector $\mathbf{v}$ as follows:
\begin{align*}
\rvv &\leftarrow \rv+\zeta\left(y_t \rvs_t-y_t^2 \mathbf{v}\right)\\
\rvv &\leftarrow \frac{1}{\Vert \rvv \Vert}{\rvv}
\end{align*}
where $\zeta>0$ is small.

Next, consider a population of $2 \leq N \leq D$ primary neurons with outputs $\mathbf{y}_t \in \mathbb{R}^N$ and feedforward synaptic weight vectors $\mathbf{v}_1, \ldots, \mathbf{v}_N \in \mathbb{R}^D$ connecting the pre-synaptic inputs $\mathbf{s}_t$ to the $N$ neurons. 
Running $N$ parallel instances of the power iteration algorithm defined above \textit{without} a decorrelation process results in representational collapse, because each synaptic weight vector $\mathbf{v}_i$ converges to the top principal component (\autoref{fig:subspace}, orange). 
We demonstrate that our whitening algorithm via gain modulation readily solves this problem.
Here, it is important that the whitening happen on a faster timescale than the principal subspace learning, to avoid collapse \citep[see][for details]{lipshutz2022normative}.

For this simulation, we set $D=3, N=2$ and randomly sample i.i.d. pre-synaptic inputs $\mathbf{s}_t \sim \mathcal{N}(\mathbf{0}, \operatorname{diag}(5,2,1))$. 
We randomly initialize two vectors $\mathbf{v}_1, \mathbf{v}_2 \in \mathbb{R}^3$ with i.i.d. Gaussian entries. 
At each time step $t$, we project pre-synaptic inputs to form the post-synaptic primary neuron inputs, $\mathbf{x}_t:=\left[\mathbf{v}_1^{\top} \mathbf{s}_t, \mathbf{v}_2^{\top} \mathbf{s}_t\right]^\top$, forming the input to Algorithm~\ref{alg:online}.
Let $\mathbf{y}_t$ be the primary neuron steady-state output; that is, $\mathbf{y}_t=\left(\rmI_N + \rmW \diag{\rvg} \rmW^{\top}\right)^{-1} \mathbf{x}_t$ (\autoref{eq:y_steadystate}). 
For $i=1,2$, we update $\mathbf{v}_i$ according to the above-defined update rules, with $\zeta=10^{-3}$.
We update the gains $\rvg$ according to Algorithm~\ref{alg:online} with $\eta=10\zeta$.
To measure the online subspace learning performance, we define
$$
\text { Subspace error }:=\left\|\rmV\left(\rmV^{\top} \rmV\right)^{-1} \rmV^{\top}-\diag{[1,1,0]} \right\|_{\text {Frob }}^2, \quad \rmV:=\left[\rvv_1, \rvv_2\right] \in \mathbb{R}^{3 \times 2}
$$
\autoref{fig:subspace} (blue) shows that our adaptive whitening algorithm with gain modulation successfully facilitates subspace learning and prevents representational collapse.

\begin{figure}[ht]
    \centering
    \includegraphics[width=.5\textwidth]{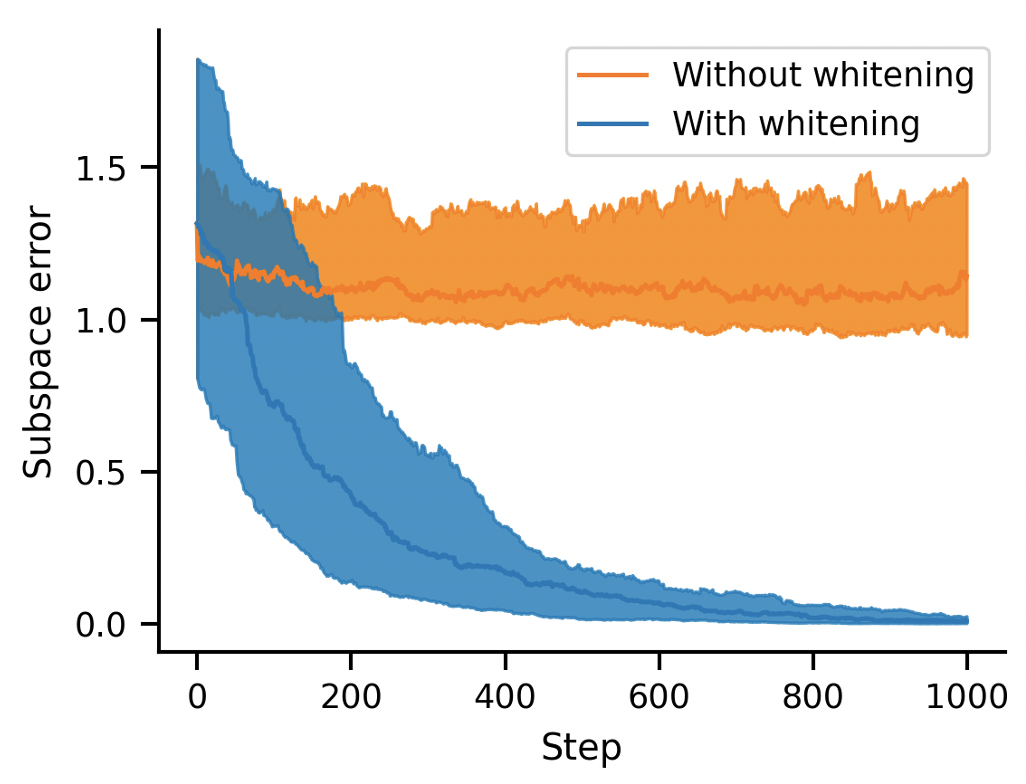}
\caption{Adaptive symmetric whitening with gain modulation prevents representational collapse during online principal subspace learning. Without whitening, subspace error stabilizes at a non-zero value, indicating that the network has converged to a collapsed representation. Shaded curves are median and [25\%, 75\%] quantiles over 50 random intializations.} 
\label{fig:subspace}
\end{figure}

\subsection{Generalized adaptive covariance transformations}\label{ssec:cov_homeo}

Our framework for adaptive whitening via gain modulation can easily be generalized to adaptively transform a signal with some initial covariance matrix to one with \textit{any target covariance} (i.e. not just the identity matrix).
This demonstrates that our adaptive gain modulation framework has implications beyond statistical whitening.
This could, for example, allow online systems to stably maintain some initial/target (non-white) output covariance under changing input statistics \citep[i.e. covariance homeostasis,][]{westrick2016pattern,benucci2013adaptation}.
The key insight, similar to the main text, is that a full-rank covariance matrix has $K_N$ degrees of freedom, and therefore marginal measurements along $K_N$ distinct axes is necessary and sufficient to represent the matrix \citep{karl_reconstructing_1994}.

Let $\rmC_\text{target}$ be some arbitrary target covariance matrix. Then the general objective is
\begin{align}\label{eq:genobjectivevanilla}
    &\min_{\{\rvy_t\}}\langle\|\rvx_t-\rvy_t\|_2^2\rangle_t\quad\text{s.t.}\quad\langle\rvy_t \rvy_t^\top\rangle_t=\rmC_\text{target}.
\end{align}
Following the same logic as in the main text, the Lagrangian becomes
\begin{align}\label{eq:genLagrange}
    &\max_\rvg\min_{\{\rvy_t\}}\langle\ell(\rvx_t,\rvy_t,\rvg)\rangle_t,\\
    &\text{where}\enspace\ell(\rvx,\rvy,\rvg):=\|\rvx-\rvy\|_2^2+\sum_{i=1}^Kg_i\left\{(\rvw_i^\top\rvy)^2-\sigma^2_i\right\},\nonumber
\end{align}
where $\sigma_i^2=\rvw_i^\top\rmC_\text{target}\rvw_i$ is the marginal variance along the axis spanned by $\rvw_i$. 
When $\rmC_\text{target}=\rmI_N$, then $\sigma_i^2=1$ for all $i$, and this reduces to our original overcomplete whitening objective (\autoref{eq:symmetricobjective}).
The only difference in the recursive algorithm optimizing this generalized objective is the gain update rule,
\begin{align}\label{eq:gengupdate}
    g_i& \gets g_i+\frac\eta2\nabla_{g_i} \ell(\rvx_t,\bar\rvy_t,\rvg) \nonumber \\
    &=g_i+\eta\left(\bar z_{i,t}^2-\sigma^2_i \right).
\end{align}
We can interpret this formulation as each interneuron having a pre-determined target input variance (perhaps learned over long time-scales), and adjusting its gains to modulate the joint responses of the primary neurons until its input variance matches the target.

\end{document}